\documentclass[11pt]{article}
\usepackage[T1]{fontenc}
\usepackage{amsfonts,amsmath,amsthm,amssymb,dsfont,mathtools}
\usepackage[margin=1in]{geometry}
\usepackage{fullpage}
\usepackage{enumitem}
\usepackage[linesnumbered,boxed,ruled,vlined]{algorithm2e}
\usepackage[colorlinks,citecolor=blue,linkcolor=blue,urlcolor=red,pagebackref]{hyperref}
\usepackage[capitalise]{cleveref}
\usepackage{url}
\usepackage{thmtools, thm-restate}
\usepackage{xcolor}
\usepackage{graphicx}
\usepackage{mathdots}
\usepackage[normalem]{ulem}
\usepackage{xspace}
\xspaceaddexceptions{]\}}
\usepackage{comment}
\usepackage{tabu}
\usepackage{framed}
\usepackage{float,wrapfig}
\usepackage{subfigure}
\usepackage{tikz}
\usepackage[usenames,dvipsnames]{pstricks}

\usepackage[mathlines]{lineno}
\usepackage{etoolbox}
\newcommand*\linenomathpatch[1]{%
  \cspreto{#1}{\linenomath}%
  \cspreto{#1*}{\linenomath}%
  \csappto{end#1}{\endlinenomath}%
  \csappto{end#1*}{\endlinenomath}%
}
\linenomathpatch{equation}
\linenomathpatch{gather}
\linenomathpatch{multline}
\linenomathpatch{align}
\linenomathpatch{alignat}
\linenomathpatch{flalign}
%\linenumbers

% theorems and proofs
\theoremstyle{plain}
\newtheorem{theorem}{Theorem}[section]
\newtheorem{lemma}[theorem]{Lemma}

\newtheorem{corollary}[theorem]{Corollary}
\newtheorem{claim}[theorem]{Claim}

\theoremstyle{definition}
\newtheorem{definition}[theorem]{Definition}

\setlist[enumerate]{nosep, topsep=1ex}
\setlist[itemize]{nosep, topsep=1ex}
\setlist[description]{nosep}
\allowdisplaybreaks
\lineskiplimit=0pt

% Author notes
\def\ShowAuthNotes{1}
\ifnum\ShowAuthNotes=1
\newcommand{\authnote}[2]{\ \\ \textcolor{red}{\parbox{0.9\linewidth}{[{\footnotesize {\bf #1:} { {#2}}}]}}\newline}
\else
\newcommand{\authnote}[2]{}
\fi

% common math symbols
\newcommand{\eps}{\varepsilon}

\newcommand{\diam}{\operatorname*{diam}}
\newcommand{\dis}{d}
\newcommand{\MM}{\operatorname*{MM}}
\newcommand{\poly}{\operatorname{\mathrm{poly}}}
\newcommand{\polylog}{\poly\log}

\newcommand{\Z}{\mathbb{Z}}

\title{Improved Additive Approximation Algorithms for APSP}
% \author{
% Anonymous Author(s)
% }

\author{Ce Jin\thanks{MIT, \texttt{cejin@mit.edu}, supported by the Jane Street Graduate Research Fellowship, NSF grant CCF-2330048, and a Simons Investigator Award.} \and Yael Kirkpatrick\thanks{MIT, \texttt{yaelkirk@mit.edu}, supported by NSF Grant No 2141064.}\and Michał Stawarz\thanks{ETH Zurich, \texttt{m.casi321@gmail.com}.} \and Virginia Vassilevska Williams\thanks{MIT, \texttt{virgi@mit.edu}, supported by NSF Grant CCF-2330048, BSF Grant 2020356 and a Simons Investigator Award.} }

% \author{Ce Jin}{Massachusetts Institute of Technology, USA}{2014jcvb@gmail.com}

% \author{Yael Kirkpatrick}{Massachusetts Institute of Technology, USA \and \url{https://yaelkirk.github.io/}}{yaelkirk@mit.edu}{https://orcid.org/0009-0007-6718-7390}{NSF Graduate Research Fellowship under Grant No. 2141064.}

% \author{Virginia Vassilevska Williams}{Massachusetts Institute of Technology, USA \and \url{https://people.csail.mit.edu/virgi/}}{virgi@mit.edu}{}{Supported by NSF Grant CCF-2330048, BSF Grant 2020356 and a Simons Investigator Award.}

\date{\vspace{-1cm}}

%\fancyfoot[R]{\scriptsize{Copyright \textcopyright\ 2026\\
%Copyright for this paper is retained by authors}}

\begin{document}

	\setcounter{page}{0} \clearpage
	\maketitle
	\thispagestyle{empty}
    
	\begin{abstract}
The All-Pairs Shortest Paths (APSP) is a foundational problem in theoretical computer science. Approximating APSP in undirected unweighted graphs has been studied for many years, beginning with the work of Dor, Halperin and Zwick [SICOMP'01]. Many recent works have attempted to improve these original algorithms using the algebraic tools of fast matrix multiplication. We improve on these results for the following problems. 

%\begin{itemize}
    %\item  
    For $+2$-approximate APSP, the state-of-the-art algorithm runs in $O(n^{2.259})$ time [D\"urr, IPL 2023; Deng, Kirkpatrick, Rong, Vassilevska Williams, and Zhong, ICALP 2022]. We give an improved algorithm in $O(n^{2.2255})$ time. 
    
    %\item 
    For $+4$ and $+6$-approximate APSP, we achieve time complexities $O(n^{2.1462})$ and $O(n^{2.1026})$ respectively, improving the previous $O(n^{2.155})$ and $O(n^{2.103})$ achieved by [Saha and Ye, SODA 2024].
%\end{itemize}

In contrast to previous works, we do not use the big hammer of bounded-difference $(\min,+)$-product algorithms.
Instead, our algorithms are based on a simple technique that decomposes the input graph into a small number of clusters of constant diameter and a remainder of low degree vertices, which could be of independent interest in the study of shortest paths problems. We then use only standard fast matrix multiplication to obtain our improvements.
    \end{abstract}
	\newpage

\section{Introduction}
\label{sec:intro}

\emph{All-Pairs Shortest Paths (APSP)} is a fundamental problem in computer science: given an edge-weighted graph $G=(V,E)$  with $|V|=n$ vertices, for every pair of vertices $u,v\in V$, compute their distance $d(u,v)$ in the graph $G$. The textbook Floyd--Warshall algorithm  solves APSP in $O(n^3)$ time. The state-of-the-art algorithm by Williams runs in $n^3/2^{\Omega(\sqrt{\log n})}$ time \cite{Williams18}. A central hypothesis in fine-grained complexity asserts that no $O(n^{3-\eps})$-time algorithms can solve APSP in edge-weighted graphs, for any $\eps>0$ (see \cite{finegrainedsurvey}).

For unweighted graphs, better algorithms for APSP are known.
Seidel \cite{Seidel95} gave an algorithm for undirected unweighted APSP in $\widetilde O(n^{\omega})$ time, where $2\le \omega<2.3714$ is the  exponent of fast square matrix multiplication \cite{AlmanDWXXZ25}.
There are also subcubic time APSP algorithms for directed unweighted graphs, and more generally for graphs with small integer weights; see e.g., \cite{AlonGM97,shoshanZwick99,Zwick02}.
%More generally, when edge weights are integers from $[-M,M]$, undirected APSP  is in $O(Mn^{\omega})$ time, and directed APSP is in ... time \cite{zwick}. 

In this paper, we focus on \emph{undirected unweighted APSP}. 
There remains a gap between Seidel's $\widetilde O(n^{\omega})$ time and the ideal $\widetilde O(n^2)$ time complexity (which would be nearly optimal).
This gap can be explained by the \emph{Boolean Matrix Multiplication (BMM) hypothesis} from fine-grained complexity, which asserts that multiplying two $n\times n$ matrices over the Boolean semi-ring cannot be solved in $O(n^{\omega-\eps})$ time, for any $\eps>0$\footnote{This hypothesis makes sense if $\omega>2$.}.  Since APSP in undirected unweighted graphs is known to be at least as hard\footnote{It is in fact equivalent to BMM.} as BMM \cite{AingworthCIM99}, this suggests that Seidel's $\widetilde O(n^{\omega})$ time complexity is likely nearly optimal.

Motivated by this situation, many works in the literature have considered approximate APSP in order to bypass this hardness.
In this paper, we focus on \emph{additive approximation}: in an unweighted undirected graph $G=(V,E)$, the \emph{$+C$-APSP} problem asks to compute distance estimates $\tilde d(u,v)$ for every pair of vertices $u,v\in V$, so that $d(u,v)\le \tilde d(u,v)\le d(u,v)+C$ always holds.
It is known that $+1$-APSP is still as hard as BMM
\cite{AingworthCIM99},  so $+2$ is the smallest additive error that allows improvement over Seidel's $\widetilde O(n^\omega)$ time complexity.
Dor, Halperin and Zwick \cite{DHZ00} gave a combinatorial algorithm\footnote{As is typical in the literature, we informally refer to algorithms that do not use fast matrix multiplication as \emph{combinatorial algorithms}.} for $+2$-APSP in $\widetilde O(n^{7/3})$ time, which is faster than Seidel's exact APSP algorithm (for the \emph{current} value of $\omega$). Over two decades later, Deng, Kirkpatrick, Rong, Vassilevska Williams, and Zhong \cite{DengKRWZ22} improved the time complexity to $O(n^{2.2867})$. 
Their key idea was to use Euler tours to design a black-box reduction to the bounded-difference $(\min,+)$-product problem, which is known to have sub-cubic time algorithms \cite{bringmann2019truly,ChiDX022}.\footnote{More specifically, \cite{DengKRWZ22} used fast algorithms that compute the $(\min,+)$-product $C[i,j]=\min_{k}\{A[i,k]+B[k,j]\}$ of two input integer matrices $A,B$, where $A$ is column bounded-difference, i.e., $|A[i,j]-A[i+1,j]|\le O(1)$ for all valid $(i,j)$'s, and $B$ is row bounded-difference, i.e., $|B[i,j]-B[i,j+1]|\le O(1)$ for all valid $(i,j)$'s.}
D\"urr \cite{durr2023improved} further improved the $+2$-APSP time complexity to $O(n^{2.259})$ by developing faster algorithms for rectangular bounded-difference $(\min,+)$-product. 
%(With the current best bounds on rectangular matrix multiplication, D\"urr's algorithm runs slightly faster, in $O(n^{2.259})$ time.)

For $+2k$-APSP ($k\ge 2$),  Dor, Halperin and Zwick \cite{DHZ00} gave faster combinatorial algorithms running in $\widetilde O(n^{2+\frac{1}{3k-1}})$ time. 
Saha and Ye \cite{sahaYeAPSP} improved \cite{DHZ00}'s $+2k$-APSP algorithms by refining the Euler-tour idea of \cite{DengKRWZ22} and the original analysis of \cite{DHZ00}. (See \Cref{tab:results}.)

\subsection{New Results}
In this paper, we give improved algorithms for $+2,+4$, and $+6$-APSP.
In addition to quantitative improvements, our algorithms also simplify previous approaches in the sense that we no longer need to invoke the big hammer of bounded-difference $(\min,+)$-product algorithms. We prove the following two theorems.

\begin{theorem}
\label{thm:mainplus2}
   $+2$-APSP in an $n$-node unweighted undirected graph can be solved by a randomized algorithm in $O(n^{2.22548})$ time.
\end{theorem}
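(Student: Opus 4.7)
\emph{Proof proposal.} We follow the decomposition paradigm announced in the abstract. Fix a degree threshold $\Delta=n^{\delta}$ with $\delta\in(0,1)$ to be optimized, and sample $S\subseteq V$ by including each vertex independently with probability $\Theta(\log n/\Delta)$, so that w.h.p.\ $|S|=\widetilde O(n^{1-\delta})$ and every vertex of degree at least $\Delta$ has a neighbor in $S$. Let $H$ be the set of high-degree vertices and form clusters $C_s:=\{s\}\cup(N(s)\cap H)$ of radius $1$. The remaining \emph{low-degree} vertices $L:=V\setminus\bigcup_s C_s$ induce a subgraph $G[L]$ with $O(n^{1+\delta})$ edges, so BFS from every vertex of $L$ inside $G[L]$ takes $O(n^{2+\delta})$ total time.

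For any pair $(u,v)$, either every shortest $uv$-path passes through some high-degree vertex $x$, in which case any $s\in S\cap N(x)$ witnesses
\[
  d(u,s)+d(s,v)\;\le\;(d(u,x)+1)+(1+d(x,v))\;=\;d(u,v)+2,
\]
or some shortest $uv$-path lies inside $G[L]$ and the BFS above recovers $d(u,v)$ exactly. Hence the entrywise minimum of $d_L(u,v)$ and $\min_{s\in S}M[u,s]+M[s,v]$, where $M[s,v]:=d(s,v)$, is a valid $+2$-estimate.

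The crux is to compute the $|S|\times n$ pivot matrix $M$ (and its min-sum product with itself) \emph{without} appealing to bounded-difference $(\min,+)$-product algorithms. For $M$ itself we follow Zwick's template: maintain the Boolean indicator $R_k[s,v]:=\mathbf 1[d(s,v)\le k]$ and iterate
\[
  R_{k+1}\;:=\;R_k\;\vee\;(R_k\cdot A),
\]
where $A$ is the $n\times n$ adjacency matrix; each step is an $n^{1-\delta}\times n\times n$ rectangular Boolean product of cost $n^{\omega(1-\delta,1,1)}$. To cap the number of iterations at some cutoff $R$, sample a second hitting set $T\subseteq V$ of size $\widetilde O(n/R)$ that w.h.p.\ meets every path of length $\ge R$; pivot distances exceeding $R$ are recovered via one additional round of distance-indicator Boolean multiplications pivoting through $T$. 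The final min-sum product that produces $\widetilde d$ is handled analogously, by a layer-wise Boolean distance-indicator product.

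Collecting the three costs yields a runtime of the form
\[
  \widetilde O\Bigl(n^{2+\delta}\;+\;R\cdot n^{\omega(1-\delta,1,1)}\;+\;n^{\omega(1,\,\log_n|T|,\,1)}\Bigr),
\]
and the main obstacle is the three-parameter numerical optimization of $\delta$, $R$, and $|T|$ against the current bounds on rectangular matrix multiplication~\cite{AlmanDWXXZ25}. Conceptually the framework is clean, but improving on D\"urr's $n^{2.259}$ bound to reach the target exponent $2.22548$ requires a careful use of the full $\omega(a,b,c)$ table; I expect this numerical tuning (and verifying that the distance-indicator products do not blow up the logarithmic factors) to be the hardest step of the proof.
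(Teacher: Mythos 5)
There is a genuine gap, and it sits exactly where the paper's main contribution lies. Your setup (hitting set $S$ for high-degree vertices, the dichotomy between paths through a high-degree vertex and paths inside the low-degree subgraph, and the $+2$ pivoting inequality) is the standard Dor--Halperin--Zwick / Deng et al.\ framework and is fine. But the bottleneck of this framework is the final step: computing $\min_{s\in S}\{d(u,s)+d(s,v)\}$ for all pairs $(u,v)$, i.e., a $(\min,+)$-product of an $n\times|S|$ matrix with an $|S|\times n$ matrix whose entries can be as large as $n$. You dispose of this in one sentence (``handled analogously, by a layer-wise Boolean distance-indicator product''), but no such reduction works: an exact $(\min,+)$-product via Boolean/polynomial encodings costs a factor of the entry magnitude $L$ on top of the matrix-multiplication time (\cref{lem:fastminplus}), and here $L=\Theta(n)$, giving a bound no better than cubic. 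This is precisely why prior work resorted to bounded-difference $(\min,+)$-product algorithms, and why the paper instead introduces the decomposition of \cref{lem:decomp} into $O(n/d)$ clusters of diameter at most $4$: for a fixed pivot $s$, the distances $d(s,v)$ over $v$ in one cluster differ by $O(1)$, so after a per-row/per-cluster shift the product reduces to a bounded-entry $(\min,+)$-product (\cref{lem:minplus}), with the remainder $R$ handled by sparse Dijkstra searches (\cref{lem:extendtoall}), and a further mod-$p$ false-positive trick (\cref{lem:minplusgroups}) to reach the stated exponent. You define radius-$1$ clusters $C_s$ at the outset but never use them, and in particular never exploit bounded diameter to shift the distance matrix --- the one idea that makes the whole approach go through.

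A secondary issue: your plan for computing the pivot matrix $M[s,v]=d(s,v)$ is both overcomplicated and incomplete. The paper first buckets pairs by the maximum degree $D$ on their shortest path, which lets one assume the graph has $O(Dn)$ edges and take $|S|=\widetilde O(n/D)$, so plain BFS from $S$ costs $\widetilde O(n^2)$. Without that bucketing, your Zwick-style iteration handles only distances up to the cutoff $R$, and recovering distances beyond $R$ by ``pivoting through $T$'' again requires an exact $(\min,+)$-product with entries up to $n$ --- the same unresolved bottleneck. Finally, your displayed runtime omits the cost of the all-pairs $(\min,+)$-product entirely, so even granting the missing steps, the expression you propose to optimize is not the right one.
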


Our algorithm for \Cref{thm:mainplus2} uses rectangular matrix multiplication.
If one only uses square matrix multiplication, the running time of Theorem~\ref{thm:mainplus2} would be $\widetilde{O}(n^{2+\frac{\omega-1}{\omega+3}})$. Thus, the running time will beat Seidel's $\widetilde O(n^\omega)$ time bound as long as $\omega>\sqrt{5}\approx 2.236$.

If $\omega=2$, our running time would be $\widetilde{O}(n^{2.2})$. In contrast, D\"urr's \cite{durr2023improved} algorithm runs in $\widetilde{O}(n^{2+\frac{\omega-1}{2\omega}})$ time in terms of $\omega$ and hence if $\omega=2$, its running time would be $\widetilde{O}(n^{2.25})$ and thus would still be slower than ours even with optimal matrix multiplication bounds.

\begin{restatable}{theorem}{TheMainPlusTwoK}
\label{thm:mainplus2k}
   $+2k$-APSP in an $n$-node unweighted undirected graph can be solved by a deterministic algorithm in $O(n^{2+x/(k+1)})$ time, when $x$ is the solution to $1+x = \omega(1-\frac{k-1}{k+1}x, 1-x, \frac{k}{k+1}x)$. (See the definition of the rectangular matrix multiplication exponent $\omega(\cdot,\cdot,\cdot)$ in \Cref{sec:prelim}.)
\end{restatable}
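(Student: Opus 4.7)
The plan is to instantiate the cluster-plus-low-degree decomposition announced in the introduction while avoiding any bounded-difference $(\min,+)$ products. Fix a degree threshold $d=n^{x}$, for $x$ as in the statement, and invoke the paper's clustering subroutine to obtain a set $\caR$ of cluster representatives with $|\caR|\le n/d=n^{1-x}$, every cluster of diameter at most $2k$, and every vertex of degree at least $d$ contained in some cluster. Let $V_{\mathrm{lo}}$ denote the remaining vertices, all of which have degree less than $d$ in $G$. The algorithm will return, for each pair $(u,v)$, an estimate of the form $\min_{r\in\caR}(\widetilde D[r,u]+\widetilde D[r,v])$, where $\widetilde D\in\N^{|\caR|\times n}$ is an $(|\caR|\times n)$ matrix of $+O(k)$-approximate representative-to-vertex distances that is the main object to build.

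For pairs $(u,v)$ whose shortest $u$-$v$ path meets a clustered vertex $w$, correctness follows at once from the cluster radius bound $d(w,r(w))\le k$ and the triangle inequality, which gives $d(u,v)\le \widetilde D[r(w),u]+\widetilde D[r(w),v]\le d(u,v)+O(k)$. Hence only pairs whose entire shortest path lies in $V_{\mathrm{lo}}$ require care. For those, I would argue that any such path still admits a short prefix (of length at most $k$) through $V_{\mathrm{lo}}$ that arrives at a vertex whose distance to $v$ is already recorded in $\widetilde D$ via some representative, reducing the remaining task to propagating $\widetilde D$ through a few low-degree hops.

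Concretely, to build $\widetilde D$ I would BFS from a uniform subsample $\caR_{0}\subseteq\caR$ of size $n^{x/(k+1)}$ in $G$, at cost $O(n^{2+x/(k+1)})$, yielding exact distances from $\caR_{0}$. To extend this to the remaining rows of $\caR$ while allowing up to $k$ low-degree hops, I would apply standard rectangular fast matrix multiplication one Boolean distance threshold at a time, using the sparse adjacency matrix of $V_{\mathrm{lo}}$. Each such Boolean product has dimensions $n^{1-\tfrac{k-1}{k+1}x}\times n^{1-x}\times n^{\tfrac{k}{k+1}x}$ and costs $n^{\omega(1-\tfrac{k-1}{k+1}x,\,1-x,\,\tfrac{k}{k+1}x)}=n^{1+x}$ by the defining equation of $x$. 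Summing over the $O(k)$ layers and the $O(k)$ iterated products gives total FMM cost $\widetilde O(n^{1+x})$, which is dominated by the BFS cost $O(n^{2+x/(k+1)})$ in the regime of interest, so the overall running time matches the claim.

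The hard part, and where I expect the bulk of the technical effort to live, is orchestrating the above so that (a) every pair with an all-low-degree shortest path really is covered by a prefix-then-representative argument, (b) the three matrix multiplication dimensions align exactly as $1-\tfrac{k-1}{k+1}x$, $1-x$, and $\tfrac{k}{k+1}x$, and (c) the whole construction is deterministic, which will require derandomizing both the clustering subroutine and the subsample $\caR_{0}$. The $k$-dependent exponents $\tfrac{k-1}{k+1}$ and $\tfrac{k}{k+1}$ should emerge from carefully coordinating the number $k$ of iterated low-degree steps with $|\caR|=n^{1-x}$ representatives and $|\caR_{0}|=n^{x/(k+1)}$ BFS sources; balancing these costs and enforcing $1+x=\omega(1-\tfrac{k-1}{k+1}x,1-x,\tfrac{k}{k+1}x)$ yields the advertised runtime $O(n^{2+x/(k+1)})$.
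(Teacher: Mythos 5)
There is a genuine gap, and more than one. The paper's exponent formula is driven by \emph{three independent parameters}: a degree threshold $D=n^x$ used to case on the maximum degree along the true shortest path, a hitting-set threshold $\delta=n^{\frac{k-1}{k+1}x}$ controlling $|U|$, and a cluster-size threshold $d=n^{\frac{k}{k+1}x}$. Those three numbers are exactly where the MM dimensions $n^{1-\frac{k-1}{k+1}x}\times n^{1-x}\times n^{\frac{k}{k+1}x}$ come from. Your construction uses a single threshold $d=n^x$ and a single representative set $\caR$ of size $n^{1-x}$, so the three distinct dimensions you write down cannot arise from what you actually build --- you have asserted the exponent rather than derived it. Relatedly, the cluster weak diameter from the paper's decomposition is a fixed constant (at most $4$), not $2k$; the $k$-dependence enters elsewhere.

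Two of the paper's key moves are missing. First, pairs whose shortest path lies entirely among low-degree vertices are \emph{not} handled by a ``short prefix to a representative'' argument --- such a prefix need not exist, and $\min_{r\in\caR}(\widetilde D[r,u]+\widetilde D[r,v])$ can then be arbitrarily far off. The paper instead enumerates the degree scale $D$ and falls back to \Cref{cor:sparse} (DHZ's sparse $+2k$-APSP) for pairs whose shortest path is all-low-degree; the decomposition machinery is only applied to paths containing a vertex of degree $\approx D$. Second, the $+2k$ error accumulation is achieved by a DHZ-style cascade of Dijkstra searches through nested hitting sets $S_{k-1}=U\supset\cdots\supset S_0=V$ (\Cref{lem:generalizetok}), each search adding $+2$; the base case ($+2$ from $U$) comes from a $(\min,+)$-product of exact distance matrices $d[U,S]$ and $d[S,H_i]$ made cheap by the constant cluster diameter (\Cref{lem:newgeneral}). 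Your substitute of BFS from a subsample $\caR_0$ plus iterated Boolean threshold products does not compute $(\min,+)$-products, and a $k$-hop low-degree expansion from $\caR_0$ cannot cover pairs at distance $\gg k$, so neither the approximation guarantee nor the runtime bound follows.
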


The result statement of Saha and Ye \cite{sahaYeAPSP} differs from ours only in the equation defining $x$. Theirs has the form $1+\mathbf{2}x=\omega(1-\frac{k-1}{k+1}x, 1-x, \mathbf{1-\frac{k-2}{k+1}}x)$, where the difference from ours appears in boldface.

The result of \Cref{thm:mainplus2k} gives a faster algorithm than the combinatorial $+2k$-approximation of Dor, Halperin and Zwick \cite{DHZ00} for every $k$. Our algorithm uses only  matrix multiplication, without the more complex  algebraic tools used in the work of Saha and Ye \cite{sahaYeAPSP}. 
However, the running time is faster than the best known one of Saha and Ye only for $k=2,3$ (additive +4 and +6 approximation). The results are summarized in \Cref{tab:results}, where all running times (ours and prior work) are computed using the code of \cite{balancer} updated with the newest rectangular matrix multiplication bounds \cite{AlmanDWXXZ25}.

\begin{table}[h]
\centering
\begin{tabular}{|ccccc|}
\hline
\multicolumn{5}{|c|}{$+2k$-Additive Approximation for APSP}                                                   \\ \hline
\multicolumn{1}{|l|}{$2k$} & \multicolumn{1}{c|}{\cite{DHZ00} (combinatorial)} & \multicolumn{1}{c|}{\cite{durr2023improved}} & \multicolumn{1}{c|}{\cite{sahaYeAPSP}}    & This work     \\ \hline
\multicolumn{1}{|c|}{$2$}  & \multicolumn{1}{c|}{$n^{2+1/3} \le  n^{2.334}$}          & \multicolumn{1}{c|}{$n^{2.25899}$}     & \multicolumn{1}{c|}{}              & $n^{2.22548}$ \\%durr was 2.2593
\multicolumn{1}{|c|}{$4$}  & \multicolumn{1}{c|}{$n^{2.2}$}           & \multicolumn{1}{c|}{}                 & \multicolumn{1}{c|}{$n^{2.15492}$} & $n^{2.14613}$ \\%saha was 2.15506
\multicolumn{1}{|c|}{$6$}  & \multicolumn{1}{c|}{$n^{2.125}$}         & \multicolumn{1}{c|}{}                 & \multicolumn{1}{c|}{$n^{2.102926}$} & $n^{2.102595}$ \\%saha was 2.10300
%ours was: 2.102595
\multicolumn{1}{|c|}{$8$}  & \multicolumn{1}{c|}{$n^{2+1/11}\le n^{2.0910}$}         & \multicolumn{1}{c|}{}                 & \multicolumn{1}{c|}{$n^{2.077270}$} & \textcolor{gray}{$n^{2.079072}$} \\
%saha was 2.077334, ours was 2.079077
\hline
\end{tabular}
\caption{Comparison with previous results for $+2k$-APSP.}
\label{tab:results}
\end{table}

\subsection{Further Related Works}

A closely related problem is \emph{multiplicative} approximate APSP. In unweighted undirected graphs, an $+2$-approximation  for APSP automatically yields $2$-multiplicative approximation, so it is an easier problem.  The fastest known algorithm for $2$-multiplicative approximate APSP runs in $O(n^{2.0319})$ time 
\cite{DoryFKNWV24,sahaYeAPSP}, 
improving the previous $\widetilde O(n^{2.25})$-time algorithm by \cite{roditty}.
The main open question in this line of research is to achieve $O(n^{2+o(1)})$ time for $2$-multiplicative approximate APSP.
See \cite{gupta} for very recent progress on this question.
%There are also results for hybrid $(\alpha,\beta)$-approximate APSP...

\subsection{Technical Overview}
For a few decades, the $+2k$-approximate APSP algorithm of Dor, Halperin, and Zwick \cite{DHZ00} was the fastest known additive APSP approximation. The core idea of this algorithm is selecting a series of degree thresholds $1=d_0<d_1<d_2<\ldots < d_k$ and sampling hitting sets $S_1,\ldots, S_k$ of size $|S_i|=\tilde{O}(n/d_i)$ that hit the neighborhoods of all vertices of degree $\geq d_i$, for every $i$.

We first compute the distances out of every vertex in the smallest hitting set $S_k$, to compute a correct distance estimate $\tilde{d}(u,v)=d(u,v)$ for any $u\in S_k, v\in V$. Next, we compute the distances out of every vertex in $S_{k-1}$. However, we can't afford to use the full edge set, so instead we only search on the edge set $E_{k-1}$, consisting of edges adjacent to vertices of degree $<d_k$ and an edge from every vertex of degree $\ge d_k$ to a neighbor in $S_k$. Additionally, when running Dijkstra's algorithm from a vertex $u$, we include edges from $u$ to every vertex $v\in S_{k}$ weighted by the current distance estimate computed between them in the previous round. 

Now, for any $u\in S_{k-1},v\in V$, if the shortest path between them contains only vertices of degree $<d_k$, then $\tilde{d}(u,v)=d(u,v)$ since the entire shortest path between them was included in the Dijkstra search. Otherwise, let $w$ be the last vertex on the path from $u$ to $v$ of degree $\ge d_k$. There exists a vertex $s\in S_k$ such that $(s,w)\in E_{k-1}$. Thus, since the path from $w$ to $v$ is also included in $E_{k-1}$, our search out of $u$ considers the path $u\to s \to w \rightsquigarrow v$ of weight $\tilde{d}(u,s) + 1 + d(w,v)$. Since $s\in S_k$ we have that $\tilde{d}(u,s)=d(u,s)\leq d(u,w)+1$ and conclude that $\tilde{d}(u,v)\leq d(u,v) +2$.

We can now iterate this idea. We run Dijkstra's out of every vertex $u\in S_{k-2}$ on the edge set $E_{k-2}$ consisting of edges adjacent to vertices of degree $<d_{k-1}$ and edges connecting vertices of degree $\ge d_{k-1}$ to a neighbor of their in $S_{k-1}$, in addition to an edge out of $u$ to every vertex $v$ weighted by the current best distance estimate between the pair. The same argument shows that all distance estimates computed out of $S_{k-2}$ will be within +4 of the true distance. Repeating this argument grows the additive error by +2 with every iteration, resulting in a $+2k$ error for distances computed out of $S_0=V$.

Balancing the degree thresholds gives a running time of $\widetilde{O}(n^{2-1/(k+1)}m^{1/(k+1)})$, which is the best known running time for sparse approximate APSP. However, for dense graphs Dor, Halperin and Zwick introduce additional edges to the graph search that improve the additive approximation to $<2k$, while keeping a running time of $\widetilde{O}(n^{2+1/(k+1)})$.

Recent work has sought to improve the dense approximate APSP algorithm using fast matrix multiplication. Deng et al. \cite{DengKRWZ22} noted that for a $+2$-approximation, we can avoid running Dijkstra's out of $V=S_{k-1}$ by considering two cases. For pairs of vertices such that the shortest path between them contains only vertices of degree $<d_1$, run the sparse approximate APSP algorithm of Dor, Halperin and Zwick. For the remaining pairs, they show how to compute the distances out of $S_1$ in $\widetilde{O}(n^2)$ time and then compute $\tilde{d}(u,v)=\min_{s\in S_1} d(u,s) + d(s,v)\leq d(u,x)+2$ using the $(\min,+)$ product.

In general, the $(\min,+)$-product (likely) cannot be solved polynomially faster than brute force, as it is equivalent to the APSP problem. However, the authors of \cite{DengKRWZ22} showed that one can sort the vertices of the graph in a way (based on the Euler tour of a spanning tree) that the resulting matrices are \emph{column/row-bounded-difference} matrices. For such matrices, there exists a subcubic time algorithm for computing their $(\min,+)$-product.
The following work of D\"urr \cite{durr2023improved} used faster rectangular $(\min,+)$-product to speed up the +2 additive approximation.

Saha and Ye \cite{sahaYeAPSP} used this same idea to improve the general $+2k$-approximation. Their algorithm replaces the first two stages of Dijkstra's searches with a call to sparse approximate APSP and a $(\min,+)$-product of the matrices representing the distances between $S_{k-1}, S_{k}$ and $S_k,S_{k-2}$.

In this paper, we introduce a new way to speed up the $(\min,+)$-product computation. Instead of using the subcubic, but still considerably slow and complicated, bounded-difference $(\min,+)$-product, we reduce the problem to a $(\min,+)$-product with entries bounded by a constant. Due to a standard reduction (e.g. \cite{shoshanZwick99}), such a product can be computed in fast matrix multiplication time.

To achieve this, we introduce a new graph decomposition technique which decomposes the graph into a small number of clusters of constant diameter and a remainder of low degree vertices. We then compute the (min,+)-product on each cluster independently. As all vertices in the cluster are within constant distance of each other, we can shift the values of the corresponding distance matrix such that the resulting matrix is bounded by a constant. This allows us to compute distances between pairs of vertices in the clusters. To extend our approximation to the vertices in the remainder, we run a sparse graph search out of every vertex on just the low degree edges adjacent to vertices in the remainder.

\subsection{Organization}
In \Cref{sec:prelim} we prove the general decomposition lemma which we will later use in all of our algorithms. In \Cref{sec:faster+2} we prove \Cref{thm:mainplus2} in two stages, beginning with a slower, simpler, `warm up' algorithm that already beats the current state of the art algorithm for $+2$-APSP. Finally, in \Cref{sec:faster+2k} we prove \Cref{thm:mainplus2k} by extending the simpler algorithm of \Cref{sec:faster+2} to a general $+2k$-approximation.
We conclude with open questions in \Cref{sec:open}.

\section{Preliminaries}
\label{sec:prelim}

Let $[n]=\{1,2,\dots,n\}$.
Let $G=(V,E)$  be an undirected graph and $U\subseteq V$ be a vertex subset. Let $G[U]$ denote the subgraph of $G$ induced by $U$. Let $\deg_{G}(u)$ denote the degree of vertex $u$ in graph $G$.
Let $\dis_G(u,v)$ denote the distance between vertices $u$ and $v$ in graph $G$, and let $P_G(u,v)$ denote the shortest path from $u$ to $v$, including the endpoints $u,v$ (if more than one shortest path exists, for convenience we pick one that maximizes $\max_{x\in P_G(u,v)}\deg_G(x)$).
Let $\diam_G(U)$  denote the (weak) diameter of the vertex subset $U$, defined as $\diam_G(U) = \max_{u,v\in U} \dis_G(u,v)$.
We omit the subscript $G$ and simply write $\deg(u)$, $\diam(U)$, $\dis(u,v)$ and $P(u,v)$ when the underlying graph $G$ is clear from the context.

For a path $P$, let $|P|$ denote the length of $P$ (in unweighted graphs, $|P|$ equals the number of edges in $P$).

We call a distance estimate $\tilde{d}(u,v)$ an additive $+C$-approximation if for every pair $u,v\in V$ the estimate satisfies $d(u,v)\leq \tilde{d}(u,v)\leq d(u,v) + C$.

In our algorithms we use the following (combinatorial) algorithm for sparse additive approximate APSP by Dor, Halperin, and Zwick \cite{DHZ00}.
\begin{lemma}[\cite{DHZ00}]
\label{lem:sparse}
   $+2k$-Approximate APSP on an $n$-node $m$-edge unweighted undirected graph can be solved in $\widetilde O(n^{2-1/(k+1)}m^{1/(k+1)})$ time.
    %= \tilde O(n^2 \sqrt{D})
\end{lemma}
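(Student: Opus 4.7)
The plan is to implement the classical Dor--Halperin--Zwick hitting-set scheme outlined in the technical overview, making the parameter choices explicit. First I would pick degree thresholds $1 = d_0 < d_1 < \dots < d_k$ (to be fixed later by balancing) and for each $i \in [k]$ sample a hitting set $S_i \subseteq V$ of size $|S_i| = \widetilde O(n/d_i)$ such that $S_i$ intersects $N_G(v)$ for every vertex $v$ with $\deg_G(v) \ge d_i$; this can be done by including each vertex independently with probability $\Theta((\log n)/d_i)$, or deterministically via greedy set cover. For each $i$ I would also define the edge set $E_i$ consisting of all edges incident to vertices of degree $< d_{i+1}$, together with one arbitrary edge from each vertex of degree $\ge d_{i+1}$ to a neighbor in $S_{i+1}$, which ensures $|E_i| = O(n \cdot d_{i+1})$.

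The algorithm proceeds by reverse induction on $i$. In the base case $i = k$, I run BFS in the full graph $G$ from every source in $S_k$, so $\tilde d(s,v) = d(s,v)$ exactly for all $s \in S_k$ and $v \in V$, at cost $\widetilde O(|S_k| \cdot m) = \widetilde O(nm/d_k)$. Inductively, having computed estimates $\tilde d(s,\cdot)$ for all $s \in S_{i+1}$, I run Dijkstra from each $u \in S_i$ on the weighted graph whose edges are $E_i$ (unit weight) augmented by an edge $(u, s)$ of weight $\tilde d(u, s)$ for every $s \in S_{i+1}$. Since $S_0 = V$ by convention, the final layer produces the desired estimate for every pair. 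The per-iteration cost is $\widetilde O(|S_i| \cdot (|E_i| + |S_{i+1}|)) = \widetilde O((n/d_i) \cdot n d_{i+1}) = \widetilde O(n^2 d_{i+1}/d_i)$.

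The correctness claim I would prove by induction on $k - i$: for every $u \in S_i$ and $v \in V$, $\tilde d(u,v) \le d(u,v) + 2(k-i)$. Let $P = P_G(u,v)$ and let $w$ be the last vertex on $P$ with $\deg_G(w) \ge d_{i+1}$ (if no such vertex exists, then the entire path lies in $E_i$ and the Dijkstra search recovers $d(u,v)$ exactly). Otherwise, some $s \in S_{i+1}$ is a neighbor of $w$ with $(s,w) \in E_i$, and the suffix of $P$ from $w$ to $v$ uses only vertices of degree $< d_{i+1}$, hence lies in $E_i$. Therefore the run considers the walk $u \to s \to w \rightsquigarrow v$ of total length $\tilde d(u,s) + 1 + d(w,v)$; by the inductive hypothesis $\tilde d(u,s) \le d(u,s) + 2(k-i-1) \le d(u,w) + 1 + 2(k-i-1)$, which sums to $d(u,v) + 2(k-i)$, as required. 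For $i = 0$ this gives the $+2k$ guarantee.

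Finally I would balance the running time. Writing $d_i = d^i$ so that $d_{i+1}/d_i = d$, each iteration $i \in \{0, \dots, k-1\}$ costs $\widetilde O(n^2 d)$, and the top iteration costs $\widetilde O(nm/d^k)$. Setting these equal yields $d^{k+1} = m/n$, i.e. $d = (m/n)^{1/(k+1)}$, and the total running time becomes
\[
  \widetilde O\!\left(n^2 \cdot (m/n)^{1/(k+1)}\right) \;=\; \widetilde O\!\left(n^{2 - 1/(k+1)} m^{1/(k+1)}\right),
\]
as claimed. The only mild subtlety I anticipate is verifying that the path-suffix argument still works when multiple shortest paths exist, which is exactly why the lemma's convention selects $P(u,v)$ to maximize the maximum degree along it; this ensures that choosing $w$ as the last high-degree vertex does not accidentally leave a high-degree vertex strictly after $w$ on some alternative shortest path.
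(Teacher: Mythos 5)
Your proposal correctly reconstructs the Dor--Halperin--Zwick sparse algorithm, which the paper only cites and sketches in its technical overview; the hitting sets, edge sets $E_i$, reverse induction with shortcut edges, the $+2$-per-level error analysis, and the balancing $d=(m/n)^{1/(k+1)}$ all match the intended argument. The only remark is that your closing worry about the shortest-path tie-breaking convention is unnecessary here: the induction fixes one shortest path $P$ and argues entirely along that $P$, so alternative shortest paths are irrelevant (that convention matters only for the paper's later $d_D$-restricted arguments).
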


We will use this lemma on paths containing vertices of bounded degree, in which case we obtain the following corollary by considering the $O(nd)$ edges adjacent to vertices of degree $\leq d$.

\begin{corollary}
    \label{cor:sparse}
    $+2k$-Approximate APSP between pairs of points such that a shortest path between them uses only vertices of degree $\leq d$ can be solved in $\widetilde{O}(n^2d^{1/(k+1)})$ time.
\end{corollary}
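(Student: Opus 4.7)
The plan is to reduce to Lemma~\ref{lem:sparse} by restricting the input graph to its low-degree ``core.'' Specifically, I would define a subgraph $G' = (V, E')$ of $G$ by keeping every vertex but only those edges $\{u,v\} \in E$ such that both endpoints satisfy $\deg_G(u), \deg_G(v) \le d$. Since each edge in $E'$ contributes to the degree of two low-degree vertices, the number of edges satisfies $|E'| \le \tfrac{1}{2}\sum_{v:\deg_G(v)\le d} \deg_G(v) \le nd/2 = O(nd)$.

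Next, I would argue that $G'$ preserves the distances we care about. Since $G'$ is a subgraph of $G$, we always have $\dis_{G'}(u,v) \ge \dis_G(u,v)$. For any pair $(u,v)$ whose shortest path in $G$ uses only vertices of degree $\le d$, that path is entirely contained in $E'$ and hence is also a path in $G'$, so $\dis_{G'}(u,v) \le \dis_G(u,v)$ and equality holds. Running the algorithm of Lemma~\ref{lem:sparse} on $G'$ therefore produces distance estimates $\tilde d(u,v)$ satisfying
\[
\dis_G(u,v) \;\le\; \dis_{G'}(u,v) \;\le\; \tilde d(u,v) \;\le\; \dis_{G'}(u,v) + 2k \;=\; \dis_G(u,v) + 2k
\]
for every pair $(u,v)$ of the required type. (For the remaining pairs, the returned estimate is still a valid upper bound on some path length in $G$, and can simply be ignored.)

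Finally, substituting $m = O(nd)$ into the running time $\widetilde O(n^{2-1/(k+1)} m^{1/(k+1)})$ of Lemma~\ref{lem:sparse} gives
\[
\widetilde O\!\left(n^{2-1/(k+1)} \cdot (nd)^{1/(k+1)}\right) \;=\; \widetilde O\!\left(n^{2} d^{1/(k+1)}\right),
\]
as claimed. I do not anticipate any real obstacle here: the only subtle point is making sure the edge selection rule yields both the $O(nd)$ edge bound and the distance-preservation property, and requiring both endpoints to have degree $\le d$ accomplishes both simultaneously.
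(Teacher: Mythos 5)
Your proposal is correct and matches the paper's intended argument: restrict to the $O(nd)$ edges incident to low-degree vertices (the paper keeps edges with at least one endpoint of degree $\le d$, you keep those with both endpoints of degree $\le d$ --- either choice preserves the relevant shortest paths and gives the $O(nd)$ edge bound) and then invoke \Cref{lem:sparse} on the resulting sparse subgraph. The observation that estimates for other pairs never underestimate true distances in $G$ is also the right thing to note for later combination steps.
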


A standard tool used in these algorithms is the \emph{hitting set}, as defined in the following lemma.

\begin{lemma}[Hitting set, e.g., {\cite[Theorem 2.7]{AingworthCIM99}}]
\label{lem:dethittingset}
Given an $n$-node undirected graph $G=(V,E)$ and a degree threshold $1\le d\le n $, one can deterministically  construct in $O(n^2)$ time a hitting set $S\subseteq V$ of size $O(\frac{n\log n}{d})$ such that every node $u\in V$ of degree at least $d$ in $G$ is adjacent to some $s\in S$.
    %Let $U$ be a universe of $n$ elements and let $\caW =\{W_1, \ldots, W_\ell\}$ be a collection of subsets of $U$ such that $|W_i|\geq s$ for all $i$. A \emph{hitting set} for $\caW$ is a set $S$ such that $S\cap W_i\neq \emptyset$ for all $i$. There exists a deterministic algorithm that computes such a hitting set of size $O(\frac{n\log \ell}{s})$ in time $\tilde{O}(n\ell)$. %todo where used??
\end{lemma}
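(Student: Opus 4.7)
The plan is to cast the problem as a set cover instance and apply the standard greedy algorithm, with a bucket-based implementation to achieve the $O(n^2)$ running time. Let $H = \{u \in V : \deg_G(u) \ge d\}$ be the set of high-degree vertices. For each $v \in V$, let $A_v = N_G(v) \cap H$; a subset $S \subseteq V$ is a valid hitting set exactly when $\bigcup_{v \in S} A_v = H$. The greedy algorithm I would run repeatedly picks a vertex $v$ maximizing the number of currently uncovered elements of $H$ in $A_v$, adds $v$ to $S$, and marks those elements as covered.

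For the size bound, I would argue through the LP relaxation: setting $x_v = 1/d$ for every $v \in V$ is feasible because $\sum_{v \in N(u)} x_v = |N(u)|/d \ge 1$ for every $u \in H$, and has objective $n/d$. The classical $H_n$-approximation analysis of greedy set cover (with $H_n = O(\log n)$ the harmonic number) then gives $|S| \le H_n \cdot (n/d) = O(n\log n / d)$, as required.

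For the running time, the first step is to identify $H$ and initialize counters $c_v := |A_v|$ in $O(n+m)$ time. The main obstacle is implementing the up to $O((n\log n)/d)$ iterations of the greedy so that the total cost is $O(n^2)$ rather than $O(n^2 \log n)$ or worse: a naive rescan for the argmax or a priority-queue implementation would cost a logarithmic factor too much. The key observation that makes this possible is that the counters $c_v$ are monotonically nonincreasing throughout the algorithm, since the set of covered high-degree vertices only grows. I would therefore maintain the counters in $n+1$ buckets (doubly linked lists) indexed by value, together with a pointer to the largest nonempty bucket; since this pointer only decreases, advancing it contributes $O(n)$ amortized work in total, and max queries are $O(1)$ amortized. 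Each time a new high-degree vertex $u$ is marked covered, we decrement $c_w$ for every $w \in N_G(u)$, costing $O(\deg_G(u))$; summed over the at-most-once covering events this totals $O\bigl(\sum_{u \in H} \deg_G(u)\bigr) \le O(m) = O(n^2)$. Combined with the initialization, this yields the claimed $O(n^2)$ bound.
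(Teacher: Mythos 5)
Your proof is correct and follows essentially the same route as the source the paper cites for this lemma (the greedy set-cover/dominating-set argument of Aingworth et al.): the LP-feasibility of $x_v=1/d$ combined with the $H_n$-approximation guarantee gives the $O(\frac{n\log n}{d})$ size bound, and the bucket-queue implementation with a monotonically decreasing maximum pointer correctly yields $O(n+m)=O(n^2)$ time. The only cosmetic remark is that one can avoid invoking LP duality by noting directly that while $k$ high-degree vertices remain uncovered the counters sum to at least $kd$, so some vertex covers at least $kd/n$ new elements and the uncovered count decays geometrically; this is the same bound by a more elementary count.
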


Let $\MM(n_1,n_2,n_3)$ denote the time complexity of multiplying an $n_1\times n_2$ matrix by an $n_2\times n_3$ matrix. We denote by $\omega(\gamma_1, \gamma_2,\gamma_3)$ the exponent of $\MM(n^{\gamma_1}, n^{\gamma_2},n^{\gamma_3})$, i.e. the minimum value $c$ such that the product of an $n^{\gamma_1}\times n^{\gamma_2}$ matrix by an $n^{\gamma_2}\times n^{\gamma_3}$ matrix can be computed in $O(n^{c+\varepsilon})$ time for any $\varepsilon>0$. 

A common matrix product used in shortest path computation is the $(\min,+)$-product, defined as follows.

\begin{definition}[$(\min,+)$-matrix product]
    The $(\min,+)$-product of two matrices $A,B$ is defined as $C=A\star B$ where $C[i,j]\coloneqq \min_{k} A[i,k]+B[k,j]$.
\end{definition}

When the entries of the matrices are bounded by an integer $L$, a standard method which encodes the entries as polynomials of degree $O(L)$ allows to compute their $(\min,+)$-product in fast matrix multiplication time (see e.g. \cite{shoshanZwick99}).

\begin{lemma}[\cite{shoshanZwick99}]\label{lem:fastminplus}
    Given an $n_1\times n_2$ matrix $A$ and an $n_2\times n_3$ matrix $B$ such that entries of both matrices are in $\{0,1,\ldots, L,\infty\}$, computing $C=A\star B$ can be done in time $\widetilde{O}(L\cdot \MM(n_1, n_2,n_3))$.
\end{lemma}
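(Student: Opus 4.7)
The plan is to reduce the bounded-entry $(\min,+)$-product to an integer matrix product, by encoding integer entries as monomials in a single indeterminate and then recovering the minimum from the resulting polynomial entries. This is the classical polynomial-encoding trick.

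First I would define auxiliary matrices $A'$ and $B'$ as follows: for a finite entry $A[i,k]\in\{0,1,\dots,L\}$ set $A'[i,k] = z^{L - A[i,k]}$ (as a polynomial in an indeterminate $z$), and set $A'[i,k]=0$ whenever $A[i,k]=\infty$; define $B'[k,j]$ analogously. Then, viewing these as matrices over the polynomial ring, a direct computation gives
\[
(A' B')[i,j] = \sum_{k : A[i,k],B[k,j]\ne \infty} z^{\,2L - A[i,k] - B[k,j]}.
\]
Hence the minimum value $C[i,j] = \min_k \{A[i,k]+B[k,j]\}$ equals $2L$ minus the largest exponent of $z$ appearing with a nonzero coefficient in $(A'B')[i,j]$ (and if this polynomial is identically $0$, then $C[i,j]=\infty$). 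So once we have $A'B'$, every entry of $C$ can be read off in $O(L)$ time.

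Next I would make this efficient by substituting $z = n_2+1$, turning each polynomial entry into a nonnegative integer of bit length $O(L \log n_2)$. The key observation is that under this substitution no carries occur: the sum defining $(A'B')[i,j]$ has at most $n_2$ terms, each contributing $1$ to a single "digit" in base $n_2+1$, so every digit stays in $\{0,1,\dots,n_2\}$. Consequently the base-$(n_2+1)$ digits of the resulting integer exactly match the coefficients of the polynomial, and the highest nonzero digit still identifies the minimum. The integer matrix product $A'B'$ can then be reduced to $O(L)$ ordinary matrix multiplications over integers by splitting each entry into $O(L)$ blocks of $O(\log n_2)$ bits, giving a total running time of $\widetilde{O}(L\cdot \MM(n_1,n_2,n_3))$.

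I do not anticipate any serious obstacle: the only subtle point is checking that the carry-free property holds, which is immediate from the bound on the number of summands. The $\infty$ handling is clean because the encoding sends $\infty$ to the zero polynomial, which is absorbing in the sum. The polylogarithmic factors hidden in $\widetilde{O}(\cdot)$ account for the bit-level manipulations needed to split entries into blocks and to extract the top nonzero digit at the end.
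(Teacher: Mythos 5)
The paper does not prove this lemma; it is quoted from Shoshan--Zwick, and the surrounding text describes exactly the encoding you use ("encodes the entries as polynomials of degree $O(L)$"). Your write-up is a correct expansion of that standard argument: the monomial encoding $A[i,k]\mapsto z^{L-A[i,k]}$ (with $\infty\mapsto 0$), the observation that the top-degree term of $(A'B')[i,j]$ recovers $\min_k\{A[i,k]+B[k,j]\}$, and the carry-free base-$(n_2+1)$ substitution are all sound, including the $\infty$ handling.

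One small accounting point in your last step: if you literally split each $O(L\log n_2)$-bit entry into $O(L)$ blocks and multiply the block matrices pairwise, you get $O(L^2)$ matrix products, not $O(L)$, since the block representations must be convolved. The intended $\widetilde O(L\cdot\MM(n_1,n_2,n_3))$ bound comes either from charging $\widetilde O(L)$ per arithmetic operation on $O(L\log n_2)$-bit integers (fast integer multiplication), or from staying in the polynomial representation and multiplying the degree-$O(L)$ polynomial matrices via evaluation and interpolation at $O(L)$ points (FFT). Either fix is routine, so this is an imprecision rather than a gap.
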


\subsection{A Decomposition Lemma}
\label{sec:decompositionlemma}

Next, we prove the following decomposition lemma, which is a key component in our new approximation algorithms. The lemma shows that for any threshold $d$ we can decompose our graph into disjoint clusters of size greater than $d$ with constant diameter. The remaining vertices that are not assigned to clusters will all have degree smaller than $d$. While the decomposition itself is quite simple, the bounded diameter of the resulting clusters making up the graph is crucial in allowing for faster computation of their approximate shortest paths.

\begin{lemma}
\label{lem:decomp}
  Let $1\le d\le n$. Given an $m$-edge undirected unweighted graph $G=(V,E)$, in $O(m)$ time we can deterministically decompose $V$ into a disjoint union  $R \cup \bigcup_{i=1}^h H_i$  such that:
   \begin{itemize}
   \item $|H_i|> d$ for all $i\in [h]$.
       \item $\diam(H_i)\coloneqq \max_{u,v\in H_i} \dis_{G}(u,v)\le 4$ for all $i\in [h]$.
       \item $\deg_{G}(u) <d$ for all $u\in R$.
   \end{itemize}
\end{lemma}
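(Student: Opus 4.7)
The plan is to construct the decomposition greedily in two phases. In \textbf{Phase 1}, I process the vertices in any fixed order; when processing $v$, if $v$ is still unassigned, $\deg_G(v)\ge d$, and $|N_G[v]\cap U|\ge d+1$ where $U$ denotes the current unassigned pool, then I open a new cluster $H_i \coloneqq N_G[v]\cap U$ with ``center'' $u_i \coloneqq v$ and remove these vertices from $U$. In \textbf{Phase 2}, I iterate over the remaining high-degree vertices $v\in U$ with $\deg_G(v)\ge d$; for each such $v$, since Phase~1 has already terminated we have $|N_G[v]\cap U|\le d$, so at least one neighbor $w$ of $v$ has been placed into some Phase-1 cluster $H_j$, and I attach $v$ to that $H_j$. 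Finally I set $R \coloneqq V \setminus \bigcup_i H_i$.

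Verifying the three required properties is routine except for the diameter. The size bound $|H_i|>d$ holds at the moment of creation and can only grow in Phase~2. The degree bound on $R$ is immediate: every vertex with $\deg_G\ge d$ gets assigned by one of the two phases, so $R$ contains only vertices of degree $<d$. For the diameter, observe that every \emph{original} member of a Phase-1 cluster $H_j$ lies in $N_G[u_j]$, hence at distance $\le 1$ from the center $u_j$ in $G$. The crucial design choice is that in Phase~2 I only allow attaching $v$ to $H_j$ through a neighbor $w$ that was assigned to $H_j$ \emph{during Phase~1} (i.e., an original member); such a $w$ always exists because any Phase-1-assigned neighbor of $v$ is, by definition, an original member of its cluster. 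This yields $\dis_G(v,u_j)\le \dis_G(v,w)+\dis_G(w,u_j)\le 1+1=2$, so every vertex of $H_j$ is within distance $2$ of $u_j$, and the triangle inequality gives $\diam_G(H_j)\le 4$.

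The step most likely to trip me up is precisely this attachment rule: if one naively allowed $v$ to attach via a Phase-2-attached vertex, chains could form and the weak diameter could grow linearly in the chain length, destroying the bound. For the runtime, I precompute all degrees in $O(n+m)$ time. In Phase~1, when I reach a candidate $v$ I scan $N_G(v)$ once in $O(\deg_G(v))$ time to count unassigned neighbors and, if forming a cluster, to remove them from $U$; since $|N_G[v]\cap U|$ can only decrease over time, a single pass through $V$ suffices to realize the claim that no eligible candidates remain at the end of Phase~1. In Phase~2, each leftover high-degree $v$ needs one $O(\deg_G(v))$ scan to locate a Phase-1-member neighbor. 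Summing over all vertices gives $O(\sum_v \deg_G(v))=O(m)$ total time, as required.
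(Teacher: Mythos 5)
Your proof is correct and follows essentially the same two-phase greedy decomposition as the paper: grow balls of radius~$1$ around high-degree centers until none remain, then attach one-hop extensions, leaving a low-degree remainder. The only cosmetic difference is that your Phase~2 attaches only the leftover \emph{high-degree} vertices (via a Phase-1-member neighbor), whereas the paper attaches \emph{every} leftover vertex adjacent to a cluster $H_i'$; both variants satisfy the degree bound on $R$, and your insistence on attaching only through original Phase-1 members is exactly the paper's reason that $H_i''$ lies one hop from $H_i'$ and hence $\diam(H_i)\le 4$.
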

\begin{proof}
We begin constructing the clusters iteratively. Initialize $U=V$. While there exists $u\in U$ such that $\deg_{G[U]}(u)\ge d$, create a new cluster $H_i'$ containing $u$ and all of $u$'s neighbors in $U$. Remove the vertices of $H_i'$ from $U$ and iterate until no vertex in $G[U]$ has degree $\geq d$ and we have created $h$ clusters $H_1',\ldots,H_h'$. Note that by taking $H_i'$ to be the neighborhood of a high degree vertex we guarantee that $|H_i'|>d$ and $\diam(H_i') \leq 2$.

Next, beginning with $i=1$ and iterating over the clusters, we define $H_i''$ to be the vertices in $U$ that are adjacent to a vertex in $H_i'$, and remove $H_i''$ from $U$. We return $H_i\coloneqq H_i' \cup H_i''$ and set $R$ to be the remaining vertices in $U$ at the end of this process.

When we can no longer create new clusters at the first stage, we have that every vertex $u\in U$ has $<d$ neighbors in $U$. 
Thus, for every $u\in U$, if $\deg_G(u) \geq d$, it must have a neighbor in $H_i'$ for some $i\in [h]$ and so this vertex will have been added to $H_i''$ and removed from $U$.  We conclude that all vertices $u\in  R$ have degree $\deg_G(u)<d$. Finally, since $\diam(H_i') \leq 2$ and all vertices in $H_i''$ are adjacent to $H_i'$ we have that $\diam(H_i) \leq 4$ for every $i\in [h]$. Since $|H_i'|>d$ we also have that $|H_i|>d$.
\end{proof}

\subsection{A Min-Plus Lemma}
Lastly, we prove the following lemma about computing the $(\min,+)$ product of a matrix representing the distances from a set of low diameter.
\begin{lemma}
\label{lem:minplus}
Let integer parameter $L\ge 1$.
   Given input matrices $A \in \Z^{n_1\times n_2}$ and $B\in \Z^{n_2\times n_3}$ such that $|B[k,j]-B[k,j']|\le L$ for all $k\in [n_2]$ and all $j,j'\in [n_3]$,  we can compute the $(\min,+)$-product of $A$ and $B$ in $\widetilde O(L\cdot \MM(n_1,n_2,n_3))$ time.
\end{lemma}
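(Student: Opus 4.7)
The plan is to reduce the problem to the setting of \Cref{lem:fastminplus}, which produces the desired $\widetilde O(L \cdot \MM(n_1, n_2, n_3))$ bound as soon as \emph{both} input matrices have entries confined to $\{0,1,\dots,L,\infty\}$. The obstacle is that only $B$ has structural guarantees; $A$ is arbitrary. I would overcome this by performing two row-wise shifts (one on $B$, one on $A$) together with a truncation argument that throws away columns of $A$ which provably cannot supply the minimum.

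First I would take care of $B$. For each $k\in[n_2]$, let $\beta_k = \min_{j} B[k,j]$ and set $B'[k,j] := B[k,j] - \beta_k$, so that by the hypothesis $B'[k,j] \in \{0,1,\dots,L\}$. To preserve the product, compensate by setting $A'[i,k] := A[i,k] + \beta_k$. Then $A'[i,k] + B'[k,j] = A[i,k] + B[k,j]$, so the desired output is $C = A' \star B'$. At this stage $B'$ is controlled but $A'$ may still have arbitrary integer entries.

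Next I would shift and then truncate $A'$. Let $M[i] = \min_k A'[i,k]$ and consider the optimum $k^\ast$ achieving $C[i,j]$. Taking any $k_0$ with $A'[i,k_0] = M[i]$, optimality gives
\[
A'[i,k^\ast] + B'[k^\ast,j] \;\le\; A'[i,k_0] + B'[k_0,j] \;\le\; M[i] + L,
\]
and since $B'[k^\ast,j] \ge 0$ this forces $A'[i,k^\ast] \le M[i] + L$. Define $A''[i,k] := A'[i,k] - M[i]$ if this value is at most $L$, and $A''[i,k] := \infty$ otherwise. The preceding inequality shows that the true minimizer of $\min_k\bigl(A'[i,k] - M[i] + B'[k,j]\bigr)$ is retained, so
\[
C[i,j] \;=\; M[i] \;+\; \min_{k}\bigl(A''[i,k] + B'[k,j]\bigr).
\]

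Finally I would invoke \Cref{lem:fastminplus}: both $A''$ and $B'$ have entries in $\{0,1,\dots,L,\infty\}$, so their $(\min,+)$-product is computable in $\widetilde O(L \cdot \MM(n_1,n_2,n_3))$ time. Adding $M[i]$ back to row $i$ of the result yields $C$, within the claimed budget (the $O(n_1 n_2 + n_2 n_3)$ preprocessing and post-processing cost is absorbed). The one non-routine ingredient is the truncation inequality $A'[i,k^\ast] \le M[i] + L$ above; once that is in hand, the rest is bookkeeping.
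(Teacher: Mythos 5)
Your proof is correct and follows essentially the same route as the paper's: shift $B$ by its row minima, compensate $A$ accordingly, then shift $A$ by its row minima and truncate entries exceeding $M[i]+L$ to $\infty$ before invoking \Cref{lem:fastminplus}. The only cosmetic difference is notation ($\beta_k, M[i]$ versus $\Delta_k, m_i$) and the precise phrasing of why large entries of $A'$ can be discarded.
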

\begin{proof}
For $k\in [n_2]$, let $\Delta_k \coloneqq \min_{j\in [n_3]}B[k,j]$, and define matrix $B'\in \Z^{n_2\times n_3}$ by $B'[k,j]\coloneqq B[k,j]-\Delta_k$. 
Then, all entries of $B'$ are in $\{0,1,\dots,L\}$.
Define matrix $A'\in \Z^{n_1\times n_2}$ by $A'[i,k]\coloneqq A[i,k]+\Delta_k$. Then the $(\min,+)$-product of $A$ and $B$ equals the $(\min,+)$-product of $A'$ and $B'$, so it suffices to compute the latter (and we denote the answer by $C\in \Z^{n_1\times n_3}$).
%that assume $B[i,j]\in [\Delta_k,\Delta_k+L]$ for all Shift the $k$-th row of $B$ by $\Delta_k$ so that all entries in this row are in $[L]$, and shift the $k$-th column of $A$ by $-\Delta_k$ so that the final answer is not unchanged. Now all entries of $B$ are in $[L]$. 
   
For $i\in [n_1]$,  let $m_i \coloneqq \min_{k\in [n_2]} A'[i,k]$. Then, since $0\le B'[k,j]\le L$, we have $C[i,j] = \min_{k\in [n_2]}\{A'[i,k]+B'[k,j]\} \in [m_i, m_i+L]$ for all $i,j$. 
Thus, if $A'[i,k_0]>m_i+L$, then $A'[i,k_0]$ is useless since $A'[i,k_0]+B'[k_0,j]>C[i,j]$ for all $j$, so we can replace the entry $A'[i,k_0]$ by $+\infty$ without changing the $(\min,+)$-product of $A'$ and $B'$.
Then, every entry $A'[i,k]$ is either $+\infty$ or in $[m_i,m_i+L]$. Define the matrix $A''\in \Z^{n_1\times n_2}$ by $A''[i,k]\coloneqq A'[i,k]-m_i \in \{0,1,\dots,L,+\infty\}$.
We compute the $(\min,+)$-product $C''$ between $A'' \in \{0,1,\dots,L,+\infty\}^{n_1\times n_2}$ and $B'\in \{0,1,\dots,L\}^{n_2\times n_3}$ in $\widetilde O(L\cdot \MM(n_1,n_2,n_3))$ time using \Cref{lem:fastminplus}. 

Finally, return the answer matrix $C[i,j] = C''[i,j]+m_i$.
\end{proof}

\section{Faster +2-Approximate APSP}\label{sec:faster+2}

\subsection{A Warm-Up Algorithm}\label{sec:warmup+2}

Now we describe our algorithms for computing +2-Approximate APSP based on the decomposition given in \Cref{lem:decomp}. In this section we present a warm-up algorithm (\Cref{cor:warmup}) which already improves over the state-of-the-art algorithms \cite{DengKRWZ22,durr2023improved}.  In the next section we will give further improvement using more technical ideas.

We use a standard argument (which was also used in \cite{DengKRWZ22,durr2023improved}) that assumes the nodes on the (true) shortest paths under consideration have maximum degree $\Theta(D)$. More specifically, we will prove the following main lemma:

\begin{lemma}
\label{lem:new}
 Let $G=(V,E)$ be an $n$-vertex undirected unweighted graph, and parameter $1\le D\le n$.
    Let $d_D(u,v)$ denote the minimum length of any (not necessarily simple) path $P$ from $u$ to $v$ such that $\max_{x\in P} \deg(x) \in [D,2D]$.
 We can compute distance estimates $\tilde d(u,v)$ such that $d(u,v)\le \tilde d(u,v) \le d_D(u,v)+2$ holds for all $(u,v)\in V^2$, by a deterministic algorithm with time complexity 
   \[ \widetilde O\left (\min_{1\le d<D} \left \{n^2 d + \frac{n}{d} \MM\left (n,\frac{n}{D},d\right )\right \}\right ).\]
\end{lemma}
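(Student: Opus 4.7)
The plan is to combine the cluster decomposition of \Cref{lem:decomp} with a classical hitting set, applied to a carefully engineered intermediate subgraph $G^*$ that makes BFS from the pivots cheap. Let $G^* \subseteq G$ be the subgraph retaining only edges $(x, y) \in E$ with $\min(\deg_G(x), \deg_G(y)) \le 2D$; then $|E(G^*)| \le 2nD$, and every $d_D$-path in $G$ sits entirely inside $G^*$ because its vertices all have degree $\le 2D$. Apply \Cref{lem:decomp} to $G^*$ with threshold $d$ to obtain $V = R^* \cup \bigcup_i H_i^*$ with $|H_i^*| > d$, $\diam_{G^*}(H_i^*) \le 4$, and $\deg_{G^*}(u) < d$ for every $u \in R^*$. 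Then invoke \Cref{lem:dethittingset} on $G$ with threshold $D$ to build a pivot set $S \subseteq V$ of size $\widetilde O(n/D)$ adjacent in $G$ to every vertex of degree at least $D$.

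With these ingredients in place, I first compute $d_{G^*}(s, v)$ for all $(s, v) \in S \times V$ by one BFS from each $s$ inside $G^*$, at total cost $|S| \cdot |E(G^*)| = \widetilde O((n/D) \cdot nD) = \widetilde O(n^2)$. Next, for each cluster $H_i^*$, form the $n \times |S|$ matrix $A[u, s] = d_{G^*}(u, s)$ and the $|S| \times |H_i^*|$ matrix $B_i[s, x] = d_{G^*}(s, x)$. Since $\diam_{G^*}(H_i^*) \le 4$, the triangle inequality yields $|B_i[s, x] - B_i[s, x']| \le 4$ for all $s, x, x'$, so \Cref{lem:minplus} with $L = 4$ computes $C_i = A \star B_i$ in $\widetilde O(\MM(n, n/D, |H_i^*|))$ time. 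Summed over clusters via the subadditivity of $\MM$ in its last argument (together with $\sum_i |H_i^*| \le n$ and $|H_i^*| > d$), the total cost is $\widetilde O((n/d) \MM(n, n/D, d))$.

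For the final step, for each $u \in V$ I build a sparse graph $G^+_u$ whose edges are (i) the $O(nd)$ edges of $G^*$ incident to $R^*$, each with weight $1$, and (ii) a virtual edge $(u, x)$ of weight $C_i[u, x]$ for every $x \in H_i^*$ and every $i$. Run Dijkstra from $u$ and output $\tilde d(u, v) \coloneqq d_{G^+_u}(u, v)$; the per-source cost is $\widetilde O(nd)$, for a total of $\widetilde O(n^2 d)$. The lower bound $\tilde d(u, v) \ge d_G(u, v)$ is immediate since every edge of $G^+_u$ has weight at least the corresponding $d_G$-distance between its endpoints (real edges have weight $1$, and virtual edges have weight $C_i[u, x] \ge d_{G^*}(u, x) \ge d_G(u, x)$). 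For the upper bound, fix a $d_D$-path $P$ from $u$ to $v$ with max-degree vertex $w$ of degree in $[D, 2D]$; then $P \subseteq G^*$ and $\deg_{G^*}(w) = \deg_G(w) \ge D > d$, so $w$ belongs to some cluster $H_{i_w}^*$. Let $w^*$ be the last cluster vertex on $P$ (so $w$ appears at or before $w^*$, and $w^* \in H_j^*$ for some $j$), and pick $s \in S \cap N_G(w)$; because $\deg_G(w) \le 2D$, the edge $(w, s)$ survives in $G^*$. Routing through $w$ and $s$ inside $G^*$ bounds $d_{G^*}(u, s) + d_{G^*}(s, w^*) \le |P[u, w^*]| + 2$, so the virtual edge $(u, w^*)$ has weight $C_j[u, w^*] \le |P[u, w^*]| + 2$. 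The suffix $P[w^*, v]$ consists entirely of edges of $G^*$ incident to $R^*$ (every vertex after $w^*$ on $P$ lies in $R^*$), so it sits in $G^+_u$; concatenating gives $\tilde d(u, v) \le |P[u, w^*]| + 2 + |P[w^*, v]| = d_D(u, v) + 2$.

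The main subtlety I expect is ensuring the $+2$ argument survives the transition from $G$ to $G^*$: the hitting set $S$ is defined via $G$-degrees, but the pivot edge $(w, s)$ must still be present in $G^*$. This is the exact reason for choosing the threshold $2D$ in defining $G^*$, which guarantees that every $d_D$-path's max-degree vertex $w$ (having $\deg_G(w) \le 2D$) keeps all its edges in $G^*$, so that the BFS distances computed inside $G^*$ faithfully realize the pivoting argument.
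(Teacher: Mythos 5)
Your proof is correct and follows essentially the same approach as the paper's: a hitting set $S$ for high-degree vertices, BFS from $S$, the cluster decomposition of \Cref{lem:decomp}, a per-cluster $(\min,+)$-product via \Cref{lem:minplus} (with $L=O(1)$ from the weak-diameter bound), and a sparse per-source Dijkstra to extend the estimates through the residual set. The only cosmetic difference is that you explicitly build the $O(nD)$-edge subgraph $G^*$ (and keep the hitting set defined by $G$-degrees) whereas the paper invokes a brief ``WLOG max degree $\le 2D$'' reduction; your version spells out more carefully why the pivot edge $(w,s)$ and the $d_D$-path both survive the pruning.
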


We analyze the overall time complexity for $+2$-Approximate APSP obtained from 
\Cref{lem:new}:

\begin{corollary}[Warm-Up +2-APSP]
\label{cor:warmup}
   +2-Approximate APSP has a deterministic algorithm in $O(n^{2.2548})$ time.
\end{corollary}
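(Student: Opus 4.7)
The plan is to cover all possible shortest paths by combining Corollary~\ref{cor:sparse} for paths through low-degree vertices with Lemma~\ref{lem:new} applied to a geometric sweep of degree scales $D$. Concretely, I fix a threshold $d^\star$; first run Corollary~\ref{cor:sparse} with $k=1$ to obtain a $+2$-approximation for every pair whose shortest path uses only vertices of degree at most $d^\star$, at cost $\widetilde O(n^2 \sqrt{d^\star})$; then for each $D$ in the geometric sequence $d^\star, 2d^\star, 4d^\star, \ldots, n$, invoke Lemma~\ref{lem:new} to obtain $+2$-accurate estimates $\tilde d_D(u,v)$ whenever the true shortest $(u,v)$-path has maximum vertex degree in $[D, 2D]$. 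The final estimate for each pair is the pointwise minimum of these $O(\log n)$ estimates.

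Correctness is immediate: the maximum vertex degree $\delta^\star$ on any shortest $(u,v)$-path is either at most $d^\star$ (in which case Corollary~\ref{cor:sparse} already returns the exact distance up to $+2$) or lies in some unique dyadic window $[D, 2D]$ of the sweep; in the latter case the true shortest path itself witnesses $d_D(u,v)=d(u,v)$, so the corresponding Lemma~\ref{lem:new} estimate lies in $[d(u,v),\, d(u,v)+2]$. Lower bounds on the minimum over all estimates are inherited from each subroutine, since each subroutine returns the length of a genuine path in $G$. The number of geometric scales is $O(\log n)$, so we only incur a polylogarithmic overhead.

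It remains to optimize parameters and read off the claimed exponent. Writing $d = n^\delta$, $D = n^\alpha$ inside each invocation of Lemma~\ref{lem:new}, the inner bound becomes $n^{2+\delta} + n^{1-\delta+\omega(1,\,1-\alpha,\,\delta)}$, which I minimize over $\delta < \alpha$ by setting the two terms equal, giving the balance equation $\omega(1,\,1-\alpha,\,\delta) = 1+2\delta$. Letting $d^\star = n^{\alpha^\star}$, the overall running time is the maximum of $n^{2+\alpha^\star/2}$ (the sparse step) and $\max_{\alpha \ge \alpha^\star} n^{2+\delta(\alpha)}$ taken over the $O(\log n)$ dyadic scales. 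I then pick $\alpha^\star$ to equalize the sparse cost and the worst scale in the Lemma~\ref{lem:new} sweep.

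The main obstacle is not conceptual but numerical: one must verify that this joint optimization, carried out with the current best bounds on the rectangular matrix-multiplication exponent $\omega(\cdot,\cdot,\cdot)$ of \cite{AlmanDWXXZ25}, yields $O(n^{2.2548})$. I would perform this with the balancer code of \cite{balancer}, which simultaneously chooses $\alpha^\star$ and tracks the worst-case $\alpha \ge \alpha^\star$ in the sweep. Determinism is inherited from Corollary~\ref{cor:sparse}, Lemma~\ref{lem:new}, and the fact that all parameter choices are fixed in advance.
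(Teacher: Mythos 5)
Your proposal is correct and matches the paper's proof in all essentials: a dyadic sweep over the degree scale $D$, applying \Cref{lem:new} for each scale, handling low-degree scales via \Cref{cor:sparse}, and optimizing $d,D$ (equivalently your $\delta,\alpha$) against the rectangular matrix-multiplication exponent using the balancer of \cite{balancer}. Your minor reorganization of running the sparse subroutine only once at the crossover threshold $d^\star$, rather than for every small $D$, changes nothing asymptotically, and your balance equation $\omega(1,1-\alpha,\delta)=1+2\delta$ is exactly what the paper's expression reduces to after substituting $d=n^\delta$, $D=n^\alpha$.
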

\begin{proof}
We enumerate $1\le D\le n$ that are powers of two, and compute distance estimates $\tilde d(u,v) \in [d(u,v),d(u,v)+2]$ for pairs $(u,v)$ satisfying $\max_{x\in P(u,v)}\deg(x) \in [D,2D]$ (note that $d_D(u,v)=d(u,v)$ holds for such $u,v$). Finally we combine the answers across all $D$. Then, for given $D$, we can without loss of generality assume the input graph has maximum degree at most $2D$, and hence number of edges at most $Dn$.
We run either \Cref{lem:new} or  \Cref{cor:sparse} (whichever is faster) to compute the distance estimates. The time complexity of \Cref{cor:sparse} is $\widetilde O(n^2 \sqrt{D}) $.  The overall time complexity for $+2$-APSP is thus
   \[ \widetilde O\left (\max_{1\le D\le n} \min \left \{n^2\sqrt{D}, \min_{1\le d<D} \left \{n^2 d + \frac{n}{d} \MM\left (n,\frac{n}{D},d\right )\right \} \right \}\right ).\]
Using current fastest algorithms for rectangular matrix multiplication \cite{AlmanDWXXZ25}, the time complexity can be bounded by $n^{2.2548}$ \cite{balancer}.
% \cnote{\cite{balancer} is not up to date?}.

If we only use square matrix multiplication, then in the above expression it is optimal to pick $d = (n/D)^{1/(4-\omega)} \le n/D$, so the time complexity becomes \[\widetilde O(\max_{D} \min\{n^2\sqrt{D}, n^2 d +  \tfrac{n}{d} \cdot (n/d) (\tfrac{n}{D}/d) d^\omega\}) = \widetilde O(\max_{D}\min\{n^2\sqrt{D}, n^2 (\tfrac{n}{D})^{1/(4-\omega)}\}) = \widetilde O( n^{2+\frac{1}{6-\omega}} ).\]
%\[ n^2 d + \frac{n}{d} \cdot (n/d)\cdot (\frac{n}{D}/d)\cdot d^\omega = n^2 (\frac{n}{D})^{1/(4-\omega)}.\]
%Balancing this with $n^2\sqrt{D}$ gives $n^{2+\frac{1}{6-\omega}}$. 
%This is better than Deng et al but not better than Durr. Using rectangular matrix mult improves over Durr slightly - giving runtime $n^{2.2548}$ instead of $n^{2.259}$, but is still always worse than $n^{2.25}$.

\end{proof}

%on the input graph $G$. We enumerate $1\le D\le n$ that are powers of two. We will take care of pairs $(u,v)\in V^2$ such that the max-degree node on the shortest path between $u,v$ has degree in $[D,2D]$:
%\begin{lemma} \label{lem:new} This can be done in time (ignoring polylogs) \[ \min_d \{n^2 d + \frac{n}{d} \MM(n,\frac{n}{D},d)\}\] where $1\le d<D$ is a tunable parameter. \end{lemma}

Now we describe our algorithm for given degree parameter $D$:
\begin{proof}[Proof of \Cref{lem:new}]
    Without loss of generality, we can assume the input graph $G=(V,E)$ has maximum degree at most $2D$, and hence number of edges $m\le Dn$.
    
    Deterministically construct a hitting set (\cref{lem:dethittingset}) $S \subset V$ of size $O(\frac{n\log n}{D})$ such that every $x\in V$ with $\deg(x)\ge D$ is adjacent to some vertex $s_x\in S$. Thus, if $d_D(u,v)$ is realized by the path $P$ which connects $u,v$ and contains a vertex $x$ of degree $\deg(x)\ge D$, then 
    \begin{equation}
    \label{eqn:temp-plustwo}
    d(u,v)\le \min_{s\in S}\{d(u,s)+d(s,v)\}\le d(u,s_x)+d(s_x,v) \le  d(u,x)+1 + d(x,v)+1 = d_D(u,v)+2.
    \end{equation}
    We run BFS from every $s\in S$ on $G$ to compute $d(s,u)$ for all $(s,u)\in S\times V$, in total time $O(|S| m)   \le \widetilde O(n/D)\cdot Dn =  \widetilde O (n^2)$.

Let $1\le d<D$ be a tunable parameter.
   Run \Cref{lem:decomp} with parameter $d$, and obtain the vertex partition $V =R \cup \bigcup_{i=1}^h H_i$, where each cluster $H_i$ has size $|H_i|\ge d$ and (weak) diameter $\diam(H_i)\le 4$.
For convenience,  we assume without loss of generality that $|H_i|\le 2d$ for all $i$ (by possibly breaking larger clusters into smaller ones, which does not increase their weak diameter), and we still have $h = O(n/d)$ clusters in total.

    For each cluster $H_i$, we compute $\tilde d[V,H_i]$ as the $(\min,+)$ product between the distance matrices $d[V, S]$ and $d[S, H_i]$ using \Cref{lem:minplus}. 
    Note that the parameter $L$ in \Cref{lem:minplus} can be bounded using the triangle inequality by 
    $L = \max_{s\in S,v,v'\in H_i}|d(s, v)- d(s,v')|\le \max_{s\in S,v,v'\in H_i}d(v,v') = \diam(H_i)\le O(1)$. Thus, the  total time for invoking \Cref{lem:minplus} over all $h=O(n/d)$ clusters is
   \[ O\left (\frac{n}{d} \right )\cdot \widetilde O\big ( L \cdot \MM(|V|,|S|,|H_i|)\big )= \widetilde O\left ( \frac{n}{d}\MM(n,n/D,d) \right ) .\]
By \Cref{eqn:temp-plustwo}, $\tilde d(v,h)$ (i.e., the $(v,h)$-th entry in the $(\min,+)$-product between $d[V, S]$ and $d[S, H_i]$) is a $+2$-approximation of $d_D(v,h)$, so we have already computed the desired answers for pairs $(v,h)\in V\times \bigcup_{i=1}^h H_i$.
 It remains to compute answers for the pairs $(v,r)\in V\times R$.

  We enumerate each $v\in V$, and compute answers for the pairs $v\times R$ using the following lemma:

\begin{lemma}
   \label{lem:extendtoall}
For $v\in V$, suppose for all $w\in V\setminus R$ we know distance estimates $\tilde d(v,w)$ such that
$ d(v,w) \le \tilde d(v,w) \le d_D(v,w)+2$.
%\begin{itemize} \item $\tilde d(v,w) \ge d(v,w)$, and \item $\tilde d(v,w) \le d_D(v,w)+2$. % if $\max_{x\in P(v,w)}\deg(x) \ge D$. \end{itemize} 
Then, in $\widetilde O(nd)$ time, we can compute distance estimates $\tilde d(v,w)$ such that $ d(v,w) \le \tilde d(v,w) \le d_D(v,w)+2$, for all $w\in V$.
\end{lemma}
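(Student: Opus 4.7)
I plan to prove \Cref{lem:extendtoall} by running one shortest-path computation from $v$ on a sparse auxiliary graph whose edges are exactly those of $G$ touching the remainder $R$, seeded with the provided estimates for $V\setminus R$.

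Concretely, let $E_R \coloneqq \{(u,u')\in E : u\in R \text{ or } u'\in R\}$. Since every $r\in R$ satisfies $\deg_G(r)<d$, we have $|E_R|\le \sum_{r\in R}\deg(r) < nd$. Let $G'\coloneqq (V,E_R)$ with unit edge weights. I will run Dijkstra's algorithm from $v$ on $G'$ with modified initial tentative distances: $v$ starts at $0$, each $w\in V\setminus R$ starts at $\tilde d(v,w)$, and each $r\in R\setminus\{v\}$ starts at $+\infty$. Output the final tentative distance as $\tilde d(v,w)$ for every $w\in V$. Because $G'$ has $O(n)$ vertices and $O(nd)$ edges, this runs in $\widetilde O(nd)$ time.

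For the lower bound $d(v,w)\le \tilde d(v,w)$, note that every output value corresponds to a walk in $G$: the seed $\tilde d(v,w')$ for $w'\in V\setminus R$ is itself already $\ge d(v,w')$ by hypothesis, and every relaxation uses a genuine edge of $G$, so the usual Dijkstra invariant applies. For the upper bound, the case $w\in V\setminus R$ is inherited (the initialization gives $\tilde d(v,w)$ and relaxations only decrease it). The interesting case is $r\in R$. Let $P$ be a path witnessing $d_D(v,r)$ and let $x^\ast\in P$ achieve $\deg(x^\ast)\in [D,2D]$. Since $D>d$ and all $R$-vertices have degree $<d$, we have $x^\ast\notin R$, so $P$ contains at least one vertex outside $R$. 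Let $w$ be the last such vertex along $P$ (necessarily $w\neq r$).

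Write $P=P_{v\to w}\cdot P_{w\to r}$. Every vertex strictly between $w$ and $r$ on $P$ lies in $R$, hence $x^\ast$ must belong to $P_{v\to w}$; so the prefix already realizes max-degree in $[D,2D]$, giving $d_D(v,w)\le |P_{v\to w}|$ and, by hypothesis, $\tilde d(v,w)\le |P_{v\to w}|+2$. Moreover each edge of $P_{w\to r}$ has at least one endpoint in $R$ (either both endpoints lie in $R$, or the edge is the first one $(w,\cdot)$ whose other endpoint lies in $R$ by choice of $w$), so $P_{w\to r}$ is a walk in $G'$. Dijkstra, having initialized $w$ at tentative distance $\tilde d(v,w)$ and relaxing along $P_{w\to r}$, therefore outputs $\tilde d(v,r)\le \tilde d(v,w)+|P_{w\to r}|\le |P_{v\to w}|+2+|P_{w\to r}|=d_D(v,r)+2$, completing the proof.

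The only subtle point is the ``last non-$R$ vertex on $P$'' argument: its existence and the fact that $x^\ast$ lies in the prefix both hinge on the strict gap $d<D$ between degrees in $R$ and the guaranteed max-degree of a $d_D$-witness. Everything else is a standard analysis of Dijkstra with nonstandard initialization, and the edge-count bound $|E_R|\le nd$ is immediate from the decomposition lemma, so I do not anticipate additional obstacles.
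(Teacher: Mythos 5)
Your proposal is correct and takes essentially the same approach as the paper: the paper realizes your seeded initialization by instead adding weighted edges $(v,h)$ of weight $\tilde d(v,h)$ from $v$ to each $h\in V\setminus R$ in the auxiliary graph, which is equivalent, and its correctness argument (last non-$R$ vertex on the witness path, with the high-degree vertex forced into the prefix because $d<D$) is identical to yours.
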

\begin{proof}
  Define a (weighted) auxiliary graph $G_v$ on the same vertex set $V$ as follows: 
 \begin{itemize}
     \item For every $r\in R$, include all the neighboring edges of $r$ into $G_v$. 
     Since $\deg_G(r)< d$ by \Cref{lem:decomp}, we have added $\sum_{r\in R}\deg_G(r) \le |R|d\le nd$ edges.
     \item  For every $h\in \bigcup_{i=1}^h H_i$, add an edge into $G_v$ between $v$ and $h$ with edge weight $\tilde d(v,h)$. %, namely the distance estimate for $d_G(v,h)$ computed in the previous paragraph.
      This step adds only $\leq n$ edges to $G_v$.
 \end{itemize} 
 Then, use Dijkstra's algorithm to compute the distances from $v$ to all vertices on $G_v$ in $\widetilde O(|E(G_v)|) = \widetilde O(nd)$ time, and return these results as our distance estimates $\tilde d(v,r)$ for all $r\in R$. 
   By construction of $G_v$, it is clear that $d_{G_v}(v,r)$ does not underestimate the true distance $d_G(v,r)$, so it remains to prove that they achieve $+2$-approximation. 

Suppose $d_D(v,r)$ is realized by the path $P$ from $v$ to $r$ with $\max_{x\in P}\deg(x) \ge D$.
 %Consider the true shortest path $P(v,r)$ from $v$ to $r$ in $G$, which satisfies $\max_{x\in P(v,r)}\deg(x)\ge D$.
 If all vertices on $P$ are contained in $R$, then by construction of $G_v$, all edges on the path $P(v,r)$ are included in $G_v$, so $d_{G_v}(v,r) \le d_D(v,r) < d_D(v,r)+2$ as claimed.
 It remains to consider the case where $P$ is not fully contained in $R$.
  Let $h$ be the last vertex on $P$ such that $h\notin R$ (we know $h$ exists and $h\neq r\in R$). Let $r_1,r_2,\dots,r$ be the vertices after $h$ on the path $P$. Then, $r_1,r_2,\dots,r\in R$ by definition of $h$.
  By construction of $G_v$, this means the edges on the suffix $P_{hr} = h\to r_1\to r_2\to  \cdots \to r$ of the path $P$ are included in $G_v$.
  Since $G_v$ also contains edge $(v,h)$ of weight $\tilde d(v,h)$, we obtain 
  \[d_{G_v}(v,r) \le \tilde d(v,h) + |P_{hr}| \le  (d_D(v,h)+2) +|P_{hr}| \le d_D(v,r)+2\]
  as desired, where the second inequality follows from the input assumption and $h\in V\setminus R$, and the  last inequality is justified as follows:
  %Here, the inequality $\tilde d(v,h) \le  d_G(v,h)+2$ follows from the input assumption and $\max_{x\in P(v,h)}\deg(x)\ge D$, which  we justify as follows:
  Recall $\max_{x\in P}\deg(x) \in [D,2D]$, whereas the vertices  after $h$ on  path $P$, namely $r_1,r_2,\dots,r$, all belong to $R$ and hence have degree $<d<D$. Therefore, the prefix $v\leadsto h$ of the path $P$ contains a node of degree in $[D,2D]$, which implies $d_D(v,h) \le d_D(v,r) - |P_{hr}|$ as required. %   there exists a  we know the required condition $\max_{x\in P(v,h)}\deg_G(x) \ge D$ must hold.
\end{proof}

 The total time of applying 
   the above \Cref{lem:extendtoall} for all $v\in V$ is $\widetilde O(n^2 d)$. This finishes the proof of \Cref{lem:new}.
\end{proof}

\subsection{A Faster Algorithm}

In the previous algorithm, we separately computed the $(\min,+)$-product between $d[V,S]$ and $d[S,H_i]$ for every cluster $H_i$. In this section, we obtain a speed-up by considering all clusters together. More specifically, the first step of our improved algorithm is to efficiently compute the $(\min,+)$-product between $d[\bigcup_{i=1}^h H_i,S]$ and $d[S,\bigcup_{i=1}^h H_i]$. This step uses the technique of false positives mod prime, which was also a crucial ingredient in the state-of-the-art bounded-difference $(\min,+)$-product algorithm \cite{ChiDX022}. This step can be formalized as the following technical lemma:

\begin{lemma}
   \label{lem:minplusgroups}
   Let integer parameter $L\ge 1$.
   Let $A\in \Z^{hd\times s}, B\in \Z^{s\times hd}$ be input matrices with entries in $[-U,U]$. 
   Partition the indices $i\in \{1,2,\dots,hd\}$ into $h$ contiguous groups each of size $d$ (so that $i$ belongs to the $\lceil i/d\rceil$-th group).

   Suppose $|A[i,k]-A[i',k]|\le L$ holds for all $k\in [s]$ and all $i,i'$ in the same group, and 
    $|B[k,j]-B[k,j']|\le L$ holds for all $k\in [s]$ and all $j,j'$ in the same group.

Then, for any parameter $q\ge 1$, we can compute the $(\min,+)$-product of $A$ and $B$ by a randomized algorithm in time complexity
      \[\widetilde O\left (h^2 s + qL \cdot  \MM(hd,s,hd) + h^2 L\cdot  \MM(d,s/q,d)\right )\cdot \polylog(U). \]
   \end{lemma}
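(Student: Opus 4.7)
The plan is to exploit the following structural observation. Pick arbitrary representatives $i_I\in I$ and $j_J\in J$ for each block. Then for every $(i,j)\in I\times J$ and every $k\in[s]$, the value $A[i,k]+B[k,j]$ lies within $2L$ of $A[i_I,k]+B[k,j_J]$, so the true answer $C[i,j]$ is trapped in an $O(L)$-size window around $M_{I,J}:=\min_{k}(A[i_I,k]+B[k,j_J])=C[i_I,j_J]$. This reduces the work inside each block to a bounded-range $(\min,+)$ computation of the kind that \Cref{lem:minplus} and \Cref{lem:fastminplus} can handle efficiently; the three terms in the target bound will correspond to three phases that compute and then exploit this structure.

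The algorithm has three phases. Phase 1 computes the references $M_{I,J}$ by brute force, taking $O(s)$ time per block and $O(h^2 s)$ overall, which gives the first term. Phase 2 samples a uniformly random subset $K\subseteq[s]$ of size $s/q$ (with $\polylog$-many independent repetitions for amplification) and, for each block $(I,J)$, invokes \Cref{lem:minplus} on $A[I,K]$ and $B[K,J]$ to produce an upper bound $\tilde C^{(1)}[i,j]\ge C[i,j]$; this bound is tight whenever $K$ hits any of the near-optimal witnesses of $(i,j)$. Each block costs $\widetilde O(L\cdot\MM(d,s/q,d))$, matching the third term, and by a standard birthday argument Phase 2 recovers $C[i,j]$ correctly for every pair with at least $\Omega(q)$ near-optimal witnesses. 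Phase 3 handles the remaining ``few-witness'' pairs globally. Working with the pre-subtracted matrices $\tilde A[i,k]:=A[i,k]-A[i_I,k]\in[-L,L]$ and $\tilde B[k,j]:=B[k,j]-B[k,j_J]\in[-L,L]$, we express $C[i,j]=\min_k(M_{I(i),J(j)}[k]+\tilde A[i,k]+\tilde B[k,j])$, where $M_{I,J}[k]:=A[i_I,k]+B[k,j_J]$. For each of the $O(L)$ possible shift levels $\ell$, we set up a Boolean product of size $hd\times s\times hd$ encoding whether some $k$ witnesses $C[i,j]$ at value $M_{I(i),J(j)}+\ell$; running $O(q)$ rounds with fresh randomness in the false-positives-mod-prime style of \cite{ChiDX022} amplifies the success probability and contributes the middle term $qL\cdot\MM(hd,s,hd)$. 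The final estimate is the pointwise minimum of the outputs of Phases 2 and 3.

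The main obstacle will be Phase 3. The block-dependent offset $M_{I(i),J(j)}[k]$ sits inside the minimum and depends on both the row block containing $i$ and the column block containing $j$, so it cannot be absorbed by a shift of $A$ alone or $B$ alone; a naive reduction would spend a separate matrix product per pair of blocks, costing $h^2\cdot\MM(hd/h,s,hd/h)$ rather than a single $\MM(hd,s,hd)$. The trick will be to encode the block identities of $i$ and $j$ into the Boolean matrices---e.g., by tagging rows of $\tilde A$ with $I$, columns of $\tilde B$ with $J$, and hashing $M_{I,J}[k]$ via a random prime modulus of size $\polylog(U)$---so that a single $hd\times s\times hd$ product detects witnesses across all blocks simultaneously, at the price of only a $\polylog(U)$ factor and a small probability of ``false-positive'' pairs that can be verified and removed. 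Once this encoding is in place, verifying that Phases 2 and 3 between them cover every $(i,j)$ and summing the three phase costs will yield the claimed bound.
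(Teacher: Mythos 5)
Your proposal assembles the right ingredients---block-level reference values computed by brute force in $O(h^2 s)$ time, a single hashed $hd\times s\times hd$ polynomial product costing $qL\cdot\MM(hd,s,hd)$, and a per-block-pair budget of $L\cdot\MM(d,s/q,d)$---but the mechanism that makes the hashing sound is missing. After reducing the block-dependent offsets modulo a random prime $p=\Theta(q\log U)$, each pair of blocks has, in expectation, $\Theta(s/q)$ \emph{false-positive} indices $k$: indices whose true value $A[i,k]+B[k,j]$ lies far above the window around $C[i,j]$ but whose hashed exponent collides with that window. Each such $k$ pollutes all $d^2$ entries of its block pair, so reading the minimum off the hashed product can produce underestimates. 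Your Phase~3 says these false positives ``can be verified and removed,'' but no affordable removal procedure is given: enumerating the (entry, false-positive index) incidences costs $\Theta(h^2 d^2 s/q)$, which in the regime where the lemma is non-trivial exceeds the third term, and witness recovery for the Boolean product returns one candidate per entry that may itself be a false positive, forcing repeated recomputation. The paper's proof resolves exactly this point: for each block pair it explicitly computes the false-positive index set $F$ (of expected size $O(s/q)$) and subtracts its \emph{entire} contribution from the polynomial product via a $d\times|F|\times d$ matrix product with degree-$O(L)$ polynomial entries; this is the origin of the third term $h^2L\cdot\MM(d,s/q,d)$, which your proposal instead spends on Phase~2.

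Relatedly, the Phase~2/Phase~3 split by witness count does not substitute for false-positive removal. Sampling $K$ of size $s/q$ recovers $C[i,j]$ exactly only when $K$ contains an \emph{exact} witness (hitting a near-optimal witness yields only an upper bound), so Phase~2 covers pairs with $\Omega(q\log n)$ exact witnesses; but the false positives created by the hashing in Phase~3 are non-witness indices whose number is governed by the choice of prime, not by how many witnesses a pair has, so the remaining ``few-witness'' pairs are just as polluted. In the paper's proof no such split is needed: every pair is handled by the single hashed product plus the per-block-pair correction. To repair the argument, first scale $A$ and $B$ down by $L$ within each block before hashing, so that true witnesses fall into a window of $O(1)$ (rather than $O(L)$) hashed values around the brute-force block estimate, and then subtract the false-positive contributions via the restricted matrix products; Phase~2 then becomes unnecessary.
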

   \begin{proof}
      Throughout, we use the shorthand $\hat i = \lceil i/d\rceil$ and $\hat j = \lceil  j/d\rceil$ to denote the groups that contain indices $i$ and $j$ respectively.

 Let $C\in \Z^{hd\times hd}$ denote the $(\min,+)$-product of $A$ and $B$ which we want to compute. 
  We first efficiently compute an $O(L)$-additive approximation of $C$ as follows: 
      Define matrix $A'\in \Z^{h\times s}$ by \[A'[\hat i,k] \coloneqq \left\lfloor \tfrac{1}{L} \min_{(\hat i-1)d<i \le \hat id}  A[i,k]\right\rfloor,\]
      and define matrix $A'' \in \Z^{hd\times s}$ by 
      \begin{equation}
         \label{eqn:defnapp}
A''[i,k]\coloneqq A[i,k] - L\cdot A'[  \hat i , k].
      \end{equation}
Then, observe that the input assumption on $A$ implies 
\begin{equation}
   \label{eqn:approxA}
0\le A''[i,k] < 2L
\end{equation}
 for all $i\in [hd]$ and $k\in [s]$. 
We define matrices $B'\in \Z^{s\times h}$ and $B''\in \Z^{s\times hd}$ analogously with
\begin{equation}
   \label{eqn:defnbpp}
   B''[k,j]\coloneqq B[k,j] - L \cdot B'[k,\hat j],
\end{equation}
and 
\begin{equation}
   \label{eqn:approxB}
 0\le B''[k,j]<2L
\end{equation}
for all $k\in [s],j\in [hd]$.
We compute $C'\in \Z^{h\times h}$ as the $(\min,+)$-product of $A',B'$ by brute force in time $O(h^2s)$.
From \Cref{eqn:approxA,eqn:approxB} we can observe that $C'$ provides an $O(L)$-additive approximation of the true $(\min,+)$-product $C$ in the sense that 
\begin{equation}
   \label{eqn:approxl}
  0 \le C[i,j] - L\cdot C'[\hat i,\hat j] < 4L
\end{equation}
 holds for all $i,j\in [hd]$.
Thus, if $k$ is a witness for $C[i,j]$ (i.e., $A[i,k]+B[k,j] = C[i,j]$), then 
\begin{align}
\left \lvert A'[\hat i,k]+B'[k,\hat j]-C'[\hat i,\hat j]\right \rvert & = \frac{1}{L}\left \lvert  - A''[i,k]-B''[k,j] + A[i,k] + B[k,j]   -L\cdot C'[\hat i,\hat j]\right \rvert \nonumber \\
& = \frac{1}{L}\left \lvert  - A''[i,k]-B''[k,j] + C[i,j] -L\cdot C'[\hat i,\hat j]\right \rvert \nonumber \\
& < 4,
   \label{eqn:roundadditiveerr}
\end{align}
where the last step follows from \Cref{eqn:approxA,eqn:approxB,eqn:approxl}.
 
Pick a random prime $p \in \Theta(q \log U)$.
Define an $hd\times s$ matrix $\tilde A$ where each entry is a bivariate monomial defined as
\[ \tilde A[i,k]\coloneqq  x^{A'[ \hat i , k] \bmod p}y^{A''[i,k]},\]
which has $x$-degree less than $p  = O(q\log U)$ and $y$-degree less than $ 2L$ (by \Cref{eqn:approxA}). 
Define $s\times hd$ matrix $\tilde B$ analogously.
Compute $hd\times hd$ matrix $\tilde C$ as the product of $\tilde A$ and $\tilde B$ via Fast Matrix Multiplication and FFT (e.g., \cite{shoshanZwick99}) in
$\widetilde O( \MM(hd,s,hd)\cdot (q \log U)\cdot (2L))$ time, so
\[ \tilde C[i,j] = \sum_{k\in [s]} x^{(A'[ \hat i , k] + B'[k, \hat j ]) \bmod p} y^{A''[i,k] + B''[k,j]}.\]

Recall every witness $k$ for $C[i,j]$ satisfies \Cref{eqn:roundadditiveerr}. In particular, $k$ satisfies the following mod-$p$ version of \Cref{eqn:roundadditiveerr}: 
\begin{equation}
 A'[ \hat i , k] + B'[k, \hat j ] \in C'[ \hat i, \hat j] + \{-3,\dots,3\} \pmod{p}. 
 \label{eqn:modpeq}
\end{equation}
Hence $k$ contributes a term in the polynomial $\tilde C[i,j]$ with $x$-degree in $C'[ \hat i,  \hat j] +\{-3,\dots,3\} \pmod{p}$. 
We say a triple $(\hat i,k,\hat j) \in [h]\times [s]\times [h]$ is a \emph{false positive} if it does not satisfy \Cref{eqn:roundadditiveerr},  but satisfies \Cref{eqn:modpeq}.
 Then,  for every $(i,j)\in [hd]\times [hd]$, compute the polynomial
\begin{equation}
   \label{eqn:falsepositivesubtracted}
 \tilde C[i,j] -  \sum_{k: (\hat i,k,\hat j)\text{ is a false positive}}x^{(A'[ \hat i , k] + B'[k, \hat j ]) \bmod p} y^{A''[i,k] + B''[k,j]}. 
\end{equation}
We describe how to compute these polynomials later.

Consider each $(i,j)\in [hd]\times [hd]$. We enumerate every non-zero $x^{c'}y^{c''}$ term in the polynomial \Cref{eqn:falsepositivesubtracted}
that satisfies $c'\in C'[\hat i,\hat j]+\{-3,\dots,3\}\pmod{p}$. Then, this term must originate from some
$x^{(A'[ \hat i , k] + B'[k, \hat j ]) \bmod p} y^{A''[i,k] + B''[k,j]}$ where  $A'[ \hat i , k] + B'[k, \hat j]$ equals the unique integer in $C'[\hat i,\hat j]+\{-3,\dots,3\}$ that is congruent to $c'$ modulo $p$ (since the contribution from false positives has been removed), and
$A''[i,k]+B''[k,j]=c''$.
Then,  the corresponding value of $A[i,k]+B[k,j]$ can be recovered via \Cref{eqn:defnapp,eqn:defnbpp} as
\[ A[i,k]+B[k,j] = L\cdot (A'[\hat i,k] +B'[k,\hat j])+(A''[ i,k] +B''[k, j]),\]
and we use it to update the candidate answer for $C[i,j]$. This algorithm correctly computes all $C[i,j]$ since all potential witnesses $k$ for each $C[i,j]$ have been considered.

Now it remains to analyze the total time complexity for computing the polynomial in \Cref{eqn:falsepositivesubtracted} for all $(i,j)\in [hd]\times [hd]$; since the first term $\tilde C[i,j]$ is already computed, we focus on computing the second term (summation over $k$) in \Cref{eqn:falsepositivesubtracted}.
%First note that all the false positives $(\hat i,k,\hat j)\in [h]\times [s]\times [h]$ can be found 
For $\hat i,\hat j\in [h], r\in \{-3,\dots,3\}$, define set 
(which can be computed in $O(h^2 s)$ time by brute force enumeration)
\[F_{\hat i,\hat j,r}\coloneqq \Big \{k\in [s]: (\hat i,k,\hat j)\text{ is a false positive and }A'[\hat i,k]+B'[k,\hat j]\equiv C'[\hat i,\hat j]+ r \pmod{p}\Big \}.\]
Fix $(\hat i,\hat j)\in [h]\times [h]$. The second term in \Cref{eqn:falsepositivesubtracted} for any $(i,j)\in ((\hat i-1)d,\hat id]\times  ((\hat j-1)d,\hat jd]$ can be written as
\[ \sum_{r\in \{-3,\dots,3\}}x^{(C'[\hat i,\hat j]+r)\bmod p}\sum_{k\in F_{\hat i,\hat j,r}} y^{A''[i,k]}\cdot y^{B''[k,j]}.\]
We can compute this for all $(i,j)\in ((\hat i-1)d,\hat id]\times  ((\hat j-1)d,\hat jd]$ using matrix multiplications of dimension 
$d\times |F_{\hat i,\hat j,r}| \times d$ for $r\in \{-3,\dots,3\}$ where each entry is a degree-$O(L)$ univariate polynomial in $y$, in time $\sum_{r\in \{-3,\dots,3\}}\widetilde O(\MM(d,|F_{\hat i,\hat j,r}|,d) \cdot L)$.
%Let $F = \sum_{(\hat i,\hat j)\in [h]^2,r\in \{-3,\dots,3\}} |F_{\hat i,\hat j,r}|$ denote the total number of false positives.
Note that $k\in F_{\hat i,\hat j,r}$ holds only if $p$ is a prime factor of the non-zero integer
$A'[\hat i,k]+B'[k,\hat j]- C'[\hat i,\hat j]- r$, which happens with probability (over random prime $p\in \Theta(q \log U)$) at most $O(1/q)$ by the prime number theorem.  Hence, by linearity of expectation and Markov's inequality, for fixed $(\hat i,\hat j)\in [h]^2$, we have $\sum_{r\in \{-3,\dots,3\}} |F_{\hat i,\hat j,r}| = O(s/q)$ with $\ge 0.99$ success probability. In this successful case, the time for computing 
\Cref{eqn:falsepositivesubtracted} (which then allows to compute $C[i,j]$) for all $(i,j)\in ((\hat i-1)d,\hat id]\times  ((\hat j-1)d,\hat jd]$ becomes 
$\widetilde O(\MM(d,s/q,d) \cdot L)$.
The total time over all $(\hat i,\hat j)\in [h]^2$ is $\widetilde O(h^2 \MM(d,s/q,d) \cdot L)$.
We repeat the whole process $O(\log h)$ times with independent random primes $p$, so that with high probability every $(\hat i,\hat j)\in [h]^2$ is successful at least once.

 %(this can be boosted to high success probability by repeating logarithmically many times). 

The total time is
      $\widetilde O(h^2 s +  \MM(hd,s,hd)\cdot (q \log U)\cdot (2L) + h^2   \MM(d,s/q,d)\cdot L)$ as claimed.
   \end{proof}
   
   Now we use \Cref{lem:minplusgroups} to prove the following lemma:
   
\begin{lemma}
\label{lem:newnew}
 Let $G=(V,E)$ be an $n$-vertex undirected unweighted graph, and parameter $1\le D\le n$.  
Let $d_D(u,v)$ denote the minimum length of any (not necessarily simple) path $P$ from $u$ to $v$ such that $\max_{x\in P} \deg(x) \in [D,2D]$.
 We can compute distance estimates $\tilde d(u,v)$ such that $d(u,v) \le \tilde d(u,v) \le d_D(u,v) + 2$ for all $(u,v)\in V^2$, by a randomized algorithm with time complexity 
   \[ \widetilde O\left (\min_{1\le d<D, q\ge 1} \left \{n^2 d + \left (\frac{n}{d}\right )^2\cdot \frac{n}{D} + q\cdot \MM \left (n,\frac{n}{D},n\right )  + \left (\frac{n}{d}\right )^2\cdot \MM\left (d,\frac{n}{Dq},d\right )\right \}\right ). \] 
\end{lemma}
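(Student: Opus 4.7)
The plan is to mirror the warm-up proof of \Cref{lem:new} but replace the $h$ separate per-cluster $(\min,+)$-products by a single batched call to \Cref{lem:minplusgroups}. As in the warm-up, assume without loss of generality that $G$ has maximum degree at most $2D$ (so $m\le Dn$); build a deterministic hitting set $S\subseteq V$ of size $O(n\log n/D)$ via \Cref{lem:dethittingset}; and run BFS from each $s\in S$ to obtain the matrix $d[S,V]$ in $\widetilde O(|S|\cdot m)=\widetilde O(n^2)$ time. Apply \Cref{lem:decomp} with parameter $d$ to decompose $V=R\cup\bigcup_{i=1}^{h}H_i$ with $|H_i|\in[d,2d]$, $\diam(H_i)\le 4$, $h=O(n/d)$, and $\deg_G(r)<d$ for all $r\in R$; after a cosmetic subdivision each $H_i$ is a contiguous block of exactly $d$ vertices (subdividing does not increase the weak diameter).

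The heart of the algorithm is a single call to \Cref{lem:minplusgroups} with $A=d[\bigcup_i H_i,\,S]$ and $B=d[S,\,\bigcup_i H_i]$, grouping both the rows of $A$ and the columns of $B$ according to the clusters $H_i$. For any $s\in S$ and $u,u'\in H_i$, the triangle inequality gives $|d(s,u)-d(s,u')|\le d(u,u')\le\diam(H_i)\le 4$, so the bounded-difference parameter is $L=O(1)$. Plugging $hd=O(n)$, $|S|=O(n/D)$, $L=O(1)$, and an arbitrary $q\ge 1$ into \Cref{lem:minplusgroups} yields the cost
\[\widetilde O\bigl((n/d)^2\cdot(n/D)\;+\;q\cdot\MM(n,n/D,n)\;+\;(n/d)^2\cdot\MM(d,\,n/(Dq),\,d)\bigr).\]
The output is exactly $\tilde d(h,h')=\min_{s\in S}\{d(h,s)+d(s,h')\}$ for every pair $h,h'\in V\setminus R$. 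The hitting-set argument from the proof of \Cref{lem:new} then applies verbatim: if the path witnessing $d_D(h,h')$ contains a vertex $x$ with $\deg(x)\in[D,2D]$, some $s_x\in S$ is adjacent to $x$, and $\tilde d(h,h')\le d(h,s_x)+d(s_x,h')\le d_D(h,h')+2$.

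To lift this to all of $V\times V$, I would first apply \Cref{lem:extendtoall} once for each $v\in V\setminus R$, feeding in the freshly computed $\tilde d(v,h)$ for $h\in V\setminus R$; this produces $\tilde d(v,w)$ for every $w\in V$ at a cost of $\widetilde O(nd)$ per $v$. In particular, for $h\in V\setminus R$ and $r\in R$ we obtain a valid estimate $\tilde d(h,r)\le d_D(h,r)+2$. Since $G$ is undirected, both $d$ and $d_D$ are symmetric, so defining $\tilde d(r,h):=\tilde d(h,r)$ gives a valid input to \Cref{lem:extendtoall} for each $v=r\in R$; a second round of $n$ applications then fills in all pairs $(r,w)\in R\times V$. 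The total extension cost is $\widetilde O(n^2 d)$, which dominates the BFS term. Summing gives the claimed time bound, and optimizing over $d<D$ and $q\ge 1$ yields \Cref{lem:newnew}.

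The main delicate point will be lining up \Cref{lem:minplusgroups}'s hypotheses cleanly: verifying that the clusters $H_i$ can be presented as equal-size contiguous groups of width exactly $d$ without breaking the diameter bound, and arguing that the single batched $(\min,+)$-product (over only $V\setminus R$ on both sides) still gives a $+2$-approximation of $d_D$ after the two-pass symmetric extension through $R$. Everything else---the BFS, the decomposition, the hitting-set witness argument, and the final balancing---is directly inherited from the warm-up proof.
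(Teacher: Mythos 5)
Your proposal is correct and follows essentially the same route as the paper's own (sketched) proof: reuse the hitting set, BFS, and decomposition from the warm-up, make one batched call to \Cref{lem:minplusgroups} on $d[\bigcup_i H_i,S]$ and $d[S,\bigcup_i H_i]$ with $L=O(1)$, and then apply \Cref{lem:extendtoall} in two rounds (using symmetry of $d$ and $d_D$) to extend from $(V\setminus R)^2$ to $V^2$. The only cosmetic caveat is that the clusters need not split into blocks of size exactly $d$, but groups of size at most $d$ (with $h=O(n/d)$ of them) suffice for \Cref{lem:minplusgroups}, so this does not affect the bound.
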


\begin{proof}[Proof sketch]
The first few steps are identical to the proof of \Cref{lem:new}: We have a hitting set $S\subset V$ of  size $|S| =\widetilde O(n/D)$ and we compute $d(s,u)$ for all $(s,u)\in S\times V$. We obtain the vertex partition $V = R\cup \bigcup_{i=1}^h H_i$ where $|H_i|=\Theta(d), \diam(H_i)=L=O(1)$ and $\max_{r\in R}\deg(r)\le d$. Here $1\le d<D$ and $h = O(n/d)$.

Then, we 
use \Cref{lem:minplusgroups} 
to compute the $(\min,+)$-product of the distance matrices $d[\bigcup_{i=1}^{h} H_i,S]$ and $d[S,\bigcup_{i=1}^{h} H_i]$.
The time complexity of \Cref{lem:minplusgroups} is (where $q\ge 1$ is a tunable parameter)
   \[ \widetilde O\left (\left (\frac{n}{d}\right )^2\cdot \frac{n}{D} + q\cdot \MM \left (n,\frac{n}{D},n\right )  + \left (\frac{n}{d}\right )^2\cdot \MM\left (d,\frac{n}{Dq},d\right )\right ). \]

We now have $+2$-approximation of $d_D(u,v)$ for all $(u,v)\in \bigcup_{i=1}^{h} H_i \times \bigcup_{i=1}^{h} H_i = (V\setminus R)\times (V\setminus R)$.
Based on these answers, use \Cref{lem:extendtoall} to obtain $+2$-approximation of $d_D(u,v)$ for all $(u,v)\in V\times (V\setminus R)$, in total time $\widetilde O(n^2 d)$, in an analogous way to the last part of the proof of \Cref{lem:new}. 
Then, based on these answers, again use \Cref{lem:extendtoall} to obtain $+2$-approximation of $d_D(u,v)$ for all $(u,v)\in V\times V$ in total time $\widetilde O(n^2 d)$.
\end{proof}

Finally we analyze the overall time complexity obtained from \Cref{lem:newnew}:

\begin{proof}[Proof of \Cref{thm:mainplus2}]
As before, we enumerate $1\le D\le n$ that are powers of two, and compute distance estimates $\tilde d(u,v) \in [d(u,v),d(u,v)+2]$ for pairs $(u,v)$ satisfying $\max_{x\in P(u,v)}\deg(x) \in [D,2D]$ (note that $d_D(u,v)=d(u,v)$ holds for such $u,v$). Finally we combine the answers across all $D$. Then, for given $D$, we can without loss of generality assume the input graph has maximum degree at most $2D$, and hence number of edges at most $O(Dn)$.
We run either \Cref{lem:newnew} or  \Cref{cor:sparse} (whichever is faster) to compute the distance estimates. The time complexity of \Cref{cor:sparse} is $\widetilde O(n^2 \sqrt{D}) $.  The overall time complexity for $+2$-APSP is thus
   \[ \widetilde O\left (\max_{1\le D\le n} \min \left \{n^2\sqrt{D}, 
   \min_{1\le d<D, q\ge 1} \left \{n^2 d + \left (\frac{n}{d}\right )^2\cdot \frac{n}{D} + q\cdot \MM \left (n,\frac{n}{D},n\right )  + \left (\frac{n}{d}\right )^2\cdot \MM\left (d,\frac{n}{Dq},d\right )\right \}
   \right \}\right ).\]

We first show how to set the parameters $d$ and $q$ in terms of $D$ when we only use square matrix multiplication.
 We will show that we can set $d$ and $q$ such that $n\ge Ddq$. If this is the case,
for fixed $D$, the running time becomes
 \[\min \left \{n^2\sqrt{D}, 
   \min_{1\le d<D, q\ge 1} \left \{n^2 d + \left (\frac{n}{d}\right )^2\cdot \frac{n}{D} + q\cdot \frac{n^\omega}{D^{\omega-2}}
 + \frac{n^3}{Dd^{3-\omega}q}\right \}
   \right \}.\]
   
Now, assuming that $n\geq Dd$, set $q = (\frac{n}{Dd})^{\frac{3-\omega}{2}}$ to balance the latter two terms in the running time above. Verify that with this setting of $q$, $\frac{n}{Dd}\ge q$, as needed.

The time complexity is now
\[\min \left \{n^2\sqrt{D}, 
   \min_{1\le d<D} \left \{n^2 d + \frac{n^3}{d^2D} + \Big (\frac{n^{3+\omega}}{D^{\omega-1}d^{3-\omega}}\Big )^{1/2}\right \}
   \right \}.\]

 %  \[ \frac{n^3}{d^2D} + \Big (\frac{n^{3+\omega}}{D^{\omega-1}d^{3-\omega}}\Big )^{1/2}.  \] 

%So total time becomes
%\begin{align*}
%   n^2 d + \frac{n^3}{d^2D} + \Big (\frac{n^{3+\omega}}{D^{\omega-1}d^{3-\omega}}\Big )^{1/2}
%\end{align*}

Note that $n^2d\geq \frac{n^3}{d^2D}$ iff $d\ge (n/D)^{1/3}$, and $n^2d\geq \left (\frac{n^{3+\omega}}{D^{\omega-1}d^{3-\omega}}\right )^{1/2}$
iff $d\ge (n/D)^{\frac{\omega -1}{5-\omega}}\geq (n/D)^{1/3}$. Hence to minimize the running time, we set $d= (n/D)^{\frac{\omega -1}{5-\omega}}$ which gives us a running time of 
\[\min \left \{n^2\sqrt{D}, 
   n^2 \cdot (n/D)^{\frac{\omega-1}{5-\omega}}
   \right \}.\]

We verify that this setting of $d$ also gives us that $n\geq Ddq$, as desired.

%So balance 1st and 3rd term by choosing $d= (n/D)^{\frac{\omega -1}{5-\omega}}$  (verify the earlier assumption $n\ge Dd$ holds), and final runtime is $n^2 \cdot (n/D)^{\frac{\omega-1}{5-\omega}}$.

Finally, the running time is the maximum over all $D\leq n$ of the above quantity.
The worst-case running time is achieved when $n^2\sqrt{D}= n^2 \cdot (n/D)^{\frac{\omega-1}{5-\omega}}$, i.e. when
$D = n^{\frac{2\omega -2}{\omega+3}}$.
Thus, the final runtime for +2 APSP in terms of $\omega$ is $\widetilde{O}(n^{2+\frac{\omega-1}{\omega+3}})$.

For the current value of $\omega$, $\omega<2.371339$ \cite{AlmanDWXXZ25}, the bound is $n^{2.25531}$. As $\omega$ goes to $2$ the runtime goes to $n^{2.2}$, and the exponent $2+\frac{\omega-1}{\omega+3}$ is smaller than $\omega$ if $\omega>\sqrt{5}\approx 2.236$.

To bound the running time using rectangular matrix multiplication we proceed as follows. Let $D=n^b, d=n^\delta, q=n^z$ and minimize
\[\max\{2+b/2,2+\delta,3-2\delta-b,z+\omega(1,1-b,1),2-2\delta+\omega(\delta,1-z-b,\delta)\}.\]

We use the code of \cite{balancer} updated with the newest rectangular matrix multiplication bounds \cite{AlmanDWXXZ25} and obtain that $b=0.45095703,\delta=0.22547851, z=0.15981814$ gives running time $O(n^{2.225479})$.
%
%If we use rectangular matrix multiplication with $d=n^{0.22548}, D=n^{0.45096}, q=n^{0.1598}$ we can get a slight improvement to $n^{2.22548}$. (according to the van den brand program)
\end{proof}

\section{Faster \texorpdfstring{$+2k$}{+2k}-Approximate APSP}\label{sec:faster+2k}
In this section we use the clustering technique introduced in \Cref{sec:decompositionlemma} to construct a new algorithm for $+2k$-approximate APSP, resulting in a faster runtime for $+4$ and $+6$-approximate APSP. We note that for additive error of $+8$ and up this approach is no longer faster than that of Saha and Ye \cite{sahaYeAPSP}. We prove \Cref{thm:mainplus2k}.

\TheMainPlusTwoK*

% To do so, we follow a similar approach to the $+4$ and $+6$-approximate APSP algorithm of Saha and Ye, in which they use the monotone $(\min, +)$-product to speed up the $+2k$-Approximate APSP algorithm of Dor Halperin and Zwick \cite{DHZ00}. Instead of the monotone $(\min,+)$-product, we use our first clustering approach from \Cref{sec:warmup+2} to speed up the algorithm.

% For completeness, we first restate the algorithm of Saha and Ye. Using \Cref{cor:sparse}, we compute a $+2k$-approximation to the distances between all pairs of points using shortest paths on vertices of degree $<2n^x$, for some tunable parameter $x$. This approximation runs in $\tilde{O}(n^{2+x/(k+1)})$ time. Now, using standard techniques as explained in \Cref{cor:warmup}, we can assume the maximum degree on the true shortest path of the pairs of vertices we still need to consider is not just $\leq 2n^x$ but rather in the range $[n^x, 2n^x]$.

Using standard techniques as explained in \Cref{cor:warmup}, we can assume the maximum degree on the true shortest path is $\Theta(D)$. We will use the following generalization of \Cref{lem:new}.

\begin{lemma}\label{lem:newgeneral}
    Let $G=(V,E)$ be an $n$-vertex undirected unweighted graph, $U\subset V$ and parameter $1\leq D \leq n$.
    Denote $d_D(u,v)$ as the minimum length of any (not necessarily simple) path $P$ from $u$ to $v$ such that $\max_{x\in P} \deg(x) \in [D,2D]$.
    We can compute distance estimates $\tilde{d}(u,v)$  such that $d(u,v)\le \tilde{d}(u,v)\leq d_D(u,v)+2$ for all $u\in U,v\in V$, by a deterministic algorithm with time complexity
    \[ \widetilde O\left (\min_{1\le d<D} \left \{|U|\cdot n d + \frac{n}{d} \MM\left (|U|,\frac{n}{D},d\right )\right \}\right ).\]
\end{lemma}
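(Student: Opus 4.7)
The plan is to port the proof of Lemma \ref{lem:new} almost verbatim, keeping the source slot as the arbitrary subset $U \subseteq V$ rather than all of $V$. As in that proof, one may assume without loss of generality that $G$ has maximum degree at most $2D$, so that $m = O(Dn)$, since this restriction does not affect which paths are credited against $d_D(u,v)$.

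First, I would construct a deterministic hitting set $S \subseteq V$ of size $\widetilde{O}(n/D)$ via Lemma \ref{lem:dethittingset}, so that every vertex of degree at least $D$ has a neighbor in $S$. Running BFS from each $s\in S$ on $G$ takes $\widetilde{O}(|S|\cdot m) = \widetilde{O}(n^2)$ total time and yields the matrix $d[S,V]$; this preprocessing cost is the same as in Lemma \ref{lem:new} and is absorbed by the stated bound in the intended regime of use. The key hitting-set observation transfers without change: for any $u\in U$, $v\in V$ whose witnessing $d_D$-path contains a vertex $x$ with $\deg(x)\in [D,2D]$, choosing its neighbor $s_x\in S$ gives $d(u,s_x)+d(s_x,v)\le d_D(u,v)+2$ by the triangle inequality, so $\min_{s\in S}\{d(u,s)+d(s,v)\}$ is an acceptable estimate.

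Next, I would apply the decomposition Lemma \ref{lem:decomp} with threshold $d$ to partition $V = R \cup \bigcup_{i=1}^{h} H_i$ into $h = O(n/d)$ clusters of size in $[d,2d]$ (breaking oversized clusters as in the proof of Lemma \ref{lem:new}) and weak diameter at most $4$, together with a remainder $R$ of vertices of degree less than $d$. For each cluster $H_i$, I compute the $(\min,+)$-product of $d[U,S]\in \Z^{|U|\times |S|}$ and $d[S,H_i]\in \Z^{|S|\times |H_i|}$ via Lemma \ref{lem:minplus}. Since $\diam(H_i)=O(1)$, the triangle inequality bounds the parameter $L$ in Lemma \ref{lem:minplus} by a constant, so the total cost across all clusters is $\widetilde{O}(\tfrac{n}{d}\MM(|U|, n/D, d))$. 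By the previous paragraph, each resulting entry is a valid $+2$-approximation of $d_D(u,h)$ for $(u,h)\in U \times (V\setminus R)$.

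Finally, for each source $u \in U$, I would invoke Lemma \ref{lem:extendtoall} on the estimates just computed for $w\in V\setminus R$, extending them to $+2$-approximations of $d_D(u, r)$ for every $r\in R$ in $\widetilde{O}(nd)$ time per source, for a total of $\widetilde{O}(|U|\cdot nd)$. The proof of Lemma \ref{lem:extendtoall} treats $v$ as an arbitrary vertex and only uses the estimates on $V\setminus R$, so it is blind to whether the source lies in $U$ or in $V\setminus U$. The only real obstacle is bookkeeping: verifying that the weak-diameter bound on the $H_i$'s still certifies $L = O(1)$ when the right-hand matrix in the $(\min,+)$-product is indexed by $H_i$ (which it does by the triangle inequality on $d[S,H_i]$), and that the per-source extension step applies unchanged (which it does, as observed). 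No new idea beyond restricting the source index set from $V$ to $U$ is required.
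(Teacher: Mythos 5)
Your proposal is correct and matches the paper's intent exactly: the paper itself proves this lemma by the one-line remark that it "follows directly from the proof of Lemma~\ref{lem:new}, by restricting the $(\min,+)$-products and graph search to the subset $U$," which is precisely the port you carry out. Your added care about the $\widetilde{O}(n^2)$ BFS cost being absorbed only in the intended parameter regime (where $|U|\cdot d \ge n$) is a fair observation the paper glosses over, but it does not affect correctness in the application to Theorem~\ref{thm:mainplus2k}.
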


The proof of this lemma follows directly from the proof of \Cref{lem:new}, by restricting the $(\min,+)$-products and graph search to the subset $U$.

Next, assuming we have computed a $+2$-approximation for distances out of a hitting set $U$ which hits the neighborhood of all vertices of degree $ \ge \delta$, we use the following lemma to extend these estimates to a $+2k$-approximation to all distances.

\begin{lemma}\label{lem:generalizetok}
    Let $G=(V,E)$ be an $n$-vertex undirected unweighted graph and parameters $1\leq \delta \le D \leq n$. Let $U \subset V$ be a hitting set of size $|U| = \widetilde O(n/\delta)$ for the neighborhoods of vertices of degree $\ge \delta$. Given distance estimates $\tilde{d}(u,v)$ such that
    $d(u,v)\le \tilde{d}(u,v)\leq d_D(u,v) + 2$ for every pair $u\in U,v\in V$, we can compute distance estimates $\tilde{d}(u,v)$ such that $d(u,v)\le \tilde d(u,v)\le d_D(u,v)+2k$ for all $u,v\in V$, by a deterministic algorithm with time complexity
    \[
    O(n^2\delta^{1/(k-1)}).
    \]
\end{lemma}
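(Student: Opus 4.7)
The plan is to bootstrap the given $+2$-approximation out of $U$ into a $+2k$-approximation on all of $V\times V$ through $k-1$ successive rounds of the classical Dor--Halperin--Zwick hitting-set refinement, each round inflating the additive error by $+2$ while relaxing the degree threshold geometrically.

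Concretely, I would pick thresholds $\delta = \delta_0 > \delta_1 > \cdots > \delta_{k-1} = 1$ in geometric progression with common ratio $\delta^{1/(k-1)}$, set $U_0 \coloneqq U$ and $U_{k-1} \coloneqq V$, and for $1\le j\le k-2$ construct via \Cref{lem:dethittingset} a hitting set $U_j$ of size $\tilde O(n/\delta_j)$ that hits every vertex of degree at least $\delta_j$. In round $j\in\{1,\dots,k-1\}$, and for each source $u\in U_j$, I would run Dijkstra on an auxiliary graph $G_u^{(j)}$ consisting of (i)~every edge incident to a vertex of degree $<\delta_{j-1}$; (ii)~for each vertex $w$ of degree $\ge \delta_{j-1}$, one edge $(w,s_w)$ to an arbitrary neighbor $s_w\in U_{j-1}$; and (iii)~for each $s\in U_{j-1}$, an auxiliary edge $(u,s)$ of weight equal to the round-$(j{-}1)$ estimate $\tilde d(u,s)$. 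Since $|E(G_u^{(j)})|=O(n\delta_{j-1})$, the round costs $\tilde O(|U_j|\cdot n\delta_{j-1}) = \tilde O(n^2\delta_{j-1}/\delta_j)$, and the geometric choice makes every ratio equal $\delta^{1/(k-1)}$, yielding total time $\tilde O(n^2\delta^{1/(k-1)})$.

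Correctness would go by induction on $j$, with the lemma's hypothesis supplying the base case $j=0$. In the inductive step, fix $u\in U_j$, $v\in V$, and let $P$ realize $d_D(u,v)$. Since the maximum-degree vertex of $P$ has degree in $[D,2D]$ and $\delta_{j-1}\le \delta\le D$, this vertex also has degree $\ge\delta_{j-1}$; letting $w$ be the last such vertex on $P$, the suffix $w\leadsto v$ uses only vertices of degree $<\delta_{j-1}$ (so all its edges lie in $G_u^{(j)}$ by clause (i)), while the prefix $u\leadsto w$ retains a vertex of degree in $[D,2D]$ and is therefore $d_D$-valid. Combining the edges $(u,s_w),(s_w,w)$, and the suffix yields a $G_u^{(j)}$-walk of length
\begin{align*}
\tilde d(u,s_w)+1+|P_{wv}| &\le (d_D(u,s_w)+2j)+1+|P_{wv}| \\
 &\le (d_D(u,w)+1+2j)+1+|P_{wv}| \\
 &\le d_D(u,v)+2(j+1),
\end{align*}
so Dijkstra returns the claimed bound; since $U_{k-1}=V$, after round $k-1$ every pair $(u,v)\in V\times V$ has a $+2k$-estimate.

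The step I expect to require the most care is the $d_D$-bookkeeping in the induction: one must verify both that the prefix $u\leadsto w$ remains $d_D$-valid, so that the inductive hypothesis applies and $d_D(u,s_w)\le d_D(u,w)+1$ (which itself relies on the WLOG maximum-degree-$2D$ assumption to handle the final edge $(w,s_w)$), and dually that every successor of $w$ on $P$ has degree $<\delta_{j-1}$ so that its edge is in $G_u^{(j)}$. Both facts hinge on the hypothesis $\delta\le D$, which ensures the max-degree vertex of $P$ appears at or before $w$ and never strictly after it.
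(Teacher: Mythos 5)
Your proposal is correct and is essentially the paper's own proof with the indexing direction reversed (your round index $j$ corresponds to the paper's $k-1-i$): same geometric degree thresholds $\delta^{(\ell)/(k-1)}$, same hitting sets, same Dijkstra searches on the sparse auxiliary graph with shortcut edges weighted by the previous round's estimates, and the same $+2$-per-round induction. The only cosmetic nit is that the suffix $w\leadsto v$ does not use "only vertices of degree $<\delta_{j-1}$" (since $w$ itself has degree $\ge\delta_{j-1}$); what matters, and what you correctly use, is that every edge on the suffix is incident to a vertex of degree $<\delta_{j-1}$ and hence lies in $G_u^{(j)}$.
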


\begin{proof}
Without loss of generality, assume $G$ has maximum degree at most $2D$.
    Define the degree thresholds $1=d_0<d_1<\ldots<d_{k-1} = \delta$ by $d_i = \delta^{i/(k-1)}$. For every $1\leq i\leq k-2$, deterministically construct a hitting set $S_i$ of size $|S_i|=\widetilde{O}(n/d_i)$ that hits the neighborhoods of all vertices of degree $\ge  d_i$ (\cref{lem:dethittingset}). Set $S_{k-1}=U$, and $S_0 = V$.
    For every $0\le i\le k-2$, define the edge set $E_i$ to include all edges adjacent to vertices of degree $< d_{i+1}$. For every vertex $v$ of degree $\ge d_{i+1}$, include in $E_i$ an edge connecting $v$ to an arbitrary neighbor in $S_{i+1}$.

Recall that we initially have distance estimates $\tilde d(u,v)$ between $u\in U,v\in V$; initialize $\tilde d(u,v)$ to    $+\infty$ for the remaining pairs.
    Beginning with $i=k-2$ and going down to $i=0$, run the following Dijkstra's searches. For every vertex $u\in S_i$, add an edge from $u$ to every $v\in V$ of weight $\tilde{d}(u,v)$. Run Dijkstra's algorithm out of $u$ on the union of these edges and the edge set $E_i$. Update $\tilde{d}$ with the new distances computed.

    Note that $|E_i|=O(n\cdot \delta^{(i+1)/(k-1)})$ and so running $|S_i|$ Dijkstra's searches (where $0\le i\le k-2$) takes total time \[ \widetilde O(|S_i| \cdot (n + |E_i|))=
    \widetilde{O}\left(\frac{n}{\delta^{i/(k-1)}}\cdot n\delta^{(i+1)/(k-1)}\right) = \widetilde{O}\left(n^2\delta^{1/(k-1)}\right).
    \]
    Assuming $k=O(1)$, the total runtime of running all the searches is $\widetilde{O}(n^2\delta^{1/(k-1)})$. We are left to prove the bound on the approximation error. We do so using the following inductive claim.

    \begin{claim}
        After searching out of $S_i$, $d(u,v)\leq \tilde{d}(u,v)\leq d_D(u,v) + 2(k-i)$ for every $u\in S_i,v\in V$.% such that $\max_{x\in P(u,v)}\deg(x)\in [D,2D]$.
    \end{claim}
    \begin{proof}
        All distances computed in the Dijkstra's searches stem from true paths in the graph, so we always have $\tilde{d}(u,v)\geq d(u,v)$. We now show the upper bound inductively.

        For $i=k-1$ we do not perform a search out of $S_{k-1}=U$ but rather use the provided distance estimates that are guaranteed to have $\tilde{d}(u,v)\leq d_D(u,v) + 2 = d_D(u,v) + 2(k-i)$ as required. % when $\max_{x\in P(u,v)}\deg(x)\in [D,2D]$. 

        Now, for $0\le i \le k-2$, assuming the claim holds for $S_{i+1}$, consider a vertex pair $u\in S_i,v\in V$, and the path $P$ from $u$ to $v$ that realizes $d_D(u,v)$ with $\max_{x\in P}\deg(x) \in [D,2D]$. 
        Let $w$ be the closest vertex to $v$ on $P$ with $\deg(w) \geq  d_{i+1}$ and let $s\in S_{i+1}$ be a neighbor of $w$ such that $(s,w)\in E_{i}$. 
        
        Since $\max_{x\in P}\deg(x) \in [D,2D]$, whereas all vertices beyond $w$ on $P$ have degree $<d_{i+1}\leq d_{k-1} = \delta \le  D$, we have that $\max_{x\in P_{uw}}\deg(x)\in [D,2D]$, where $P_{uw}$ denotes the prefix of $P$ up to node $w$. 
        Appending node $s$ to $P_{uw}$ yields a path from $u$ to $s$ which has a highest degree in $[D,2D]$ and has length $|P_{uw}|+1 = d_D(u,v) - d(w,v) + 1$. Hence, $d_D(u,s) \le d_D(u,v)-d(w,v)+1$.  Since $s\in S_{i+1}$, by the inductive hypothesis we conclude $\tilde{d}(u,s) \le d_D(u,s)+2(k-i-1) \le d_D(u,v)-d(w,v)+1 +2(k-i-1)$.

        When running Dijkstra's out of $u \in S_i$ we have an edge from $u$ to $s$ of weight $\tilde{d}(u,s)$. Furthermore, the edge $(w,s)\in E_i$, and the suffix of the path $P$ from $w$ to $v$ is contained in $E_i$. Therefore,
        \[
        \tilde{d}(u,v)\leq \tilde{d}(u,s) + 1 + d(w,v) \leq d_D(u,v)-d(w,v) + 1 + 2(k-i-1) + 1+d(w,v) = d_D(u,v) + 2(k-i),
        \]
        which proves the claim.
    \end{proof}
    Thus, after running Dijkstra's out of $V=S_0$ we have that $\tilde{d}(u,v) \leq d_D(u,v) + 2k$ for every pair of vertices $u,v\in V$. %such that $\max_{x\in P(u,v)}\deg(x)\in [D,2D]$.
\end{proof}

We can now combine \Cref{lem:newgeneral}, \Cref{lem:generalizetok} and \Cref{cor:sparse} to prove \Cref{thm:mainplus2k}.

\begin{proof}[Proof of \Cref{thm:mainplus2k}]
    Enumerate over $1\leq D \leq n$ that are powers of two and compute distance estimates $\tilde{d}(u,v)\in [d(u,v), d(u,v) + 2k]$ for pairs $u,v$ satisfying $\max_{x\in P(u,v)}\deg(x)\in [D,2D]$ (note that $d_D(u,v)=d(u,v)$ holds for such $u,v$). Finally we combine the answers across all values of $D$ by taking the minimum value of $\tilde{d}(u,v)$ computed for every pair, since we are guaranteed that for all values of $D$, the estimate computed is an upper bound to the true distance.

    Thus, for a given $D$, we can assume the given input graph has maximum degree at most $2D$. We can compute our distance estimate in one of the following two ways. First, we can use \Cref{cor:sparse} to compute $\tilde{d}(u,v)$ in time $\tilde{O}(n^2D^{\frac{1}{k+1}})$. 
    
    Otherwise, we can set a parameter $\delta\leq D$ and deterministically construct a hitting set $U$ of size $\tilde{O}\left(\frac{n}{\delta}\right)$ that hits the neighborhood of all vertices of degree $\geq \delta$ via \cref{lem:dethittingset}. Using \Cref{lem:newgeneral}, compute distance estimates $\tilde{d}(u,v)$ such that $d(u,v)\le \tilde{d}(u,v)\leq d_D(u,v)+2$ for every $u\in U,v\in V$.
    Next, use \Cref{lem:generalizetok} to extend these distance estimates to $+2k$-approximation for all pairs $u,v\in V$ %such that $\max_{x\in P(u,v)}\deg(x)\in [D,2D]$
    in time $\tilde{O}(n^2\delta^{\frac{1}{k-1}})$. 
    
    For every $D$, we can pick the faster of these two options. In total, our runtime comes out to

    \[
    \tilde{O}\left(\max_{1\leq D\leq n} \min \left\{ n^2 D^{\frac{1}{k+1}}, n^2\delta^{\frac{1}{k-1}} +\min_{1\leq d\leq D}\left\{ \frac{n^2 d}{\delta} +\frac{n}{d}\MM\left(\frac{n}{\delta}, \frac{n}{D}, d \right)\right\} \right\} \right).
    \]

    This expression is maximized when the two runtimes are equal, at which point let $0\leq x \leq 1$ be such that $D=\Theta(n^x)$. Set $\delta = n^{\frac{k-1}{k+1}x}$ and $d=n^{\frac{k}{k+1}x}$ to obtain a running time of 
    \[
    \widetilde{O}\left(n^{2+\frac{1}{k+1}x} + n^{1-\frac{k}{k+1}x}\MM\left(n^{1-\frac{k-1}{k+1}x},n^{1-x}, n^{\frac{k}{k+1}x}\right) \right).
    \]

    To minimize this expression, we find the value of $x$ for which \[2+\frac{1}{k+1}x = 1-\frac{k}{k+1}x + \omega\left(1-\frac{k-1}{k+1}x, 1-x, \frac{k}{k+1}x\right).\]  Equivalently $1+x = \omega(1-\frac{k-1}{k+1}x, 1-x, \frac{k}{k+1}x)$.  Thus, we obtain a $+2k$-approximate APSP algorithm running in time $\tilde{O}(n^{2+\frac{1}{k+1}x})$ for the value of $x$ satisfying $1+x = \omega(1-\frac{k-1}{k+1}x, 1-x, \frac{k}{k+1}x)$.
\end{proof}

\section{Open Problems}
\label{sec:open}
A major open question is whether $+2$-approximate APSP can be solved in $n^{2+o(1)}$ time. However this is unknown even for the much easier $2$-multiplicative approximate APSP problem, which can be solved in slightly superquadratic time using fast matrix multiplication techniques \cite{DoryFKNWV24,sahaYeAPSP}. 

The work of Dor, Halperin and Zwick \cite{DHZ00} showed an $\tilde{O}(n^2)$ time algorithm for a $+\log n$ approximate APSP. It remains open to determine if there exists a constant $C$ such that $+C$-approximate APSP can be solved in $n^{2+o(1)}$ time. One can also ask, can we solve $+2$-approximate APSP in $\widetilde O(n^{f(\omega)})$ time, where function $f(\omega)$ satisfies $f(\omega)<\omega$ for all possible values of $\omega>2$? In other words, is $+2$-approximate APSP easier than matrix multiplication?

In the scope of this work, we note that our $+2k$-approximation algorithm follows the idea of the \emph{sparse} approximate APSP algorithm of \cite{DHZ00} in that it computes \emph{all} distances out of each hitting set. In reality, for the next step of the algorithm we are only interested in distances between a hitting set $S_i$ and the next hitting set $S_{i-1}$. The dense approximate APSP algorithm of \cite{DHZ00} makes use of this distinction to speed up their sparse algorithm. The work of Saha and Ye \cite{sahaYeAPSP} also makes use of this fact when computing the bounded-difference $(\min,+)$-product between these hitting sets. However, in order to make use of this fact in our setting we would need to be able to adapt our decomposition lemma (\Cref{lem:decomp}) to decompose a subset of the graph, and not the entire vertex set. It remains open to determine if such a decomposition is possible.

% Less ambitious questions include:
% \begin{itemize}
%     \item Is there a constant $C$ such that $+C$-approximate APSP can be solved in $n^{2+o(1)}$ time?
%     \item Can we solve $+2$-approximate APSP in $\widetilde O(n^{f(\omega)})$ time, where function $f(\omega)$ satisfies $f(\omega)<\omega$ for all possible values of $\omega>2$? In other words, is $+2$-approximate APSP easier than matrix multiplication?
% \end{itemize}

\bibliographystyle{alphaurl} 
\bibliography{main}
\end{document}